
\documentclass[11pt]{article}


\title{Prophet Secretary\thanks{A preliminary version of this paper is to appear in Proceedings of the 23rd European Symposium on Algorithms (ESA 2015), Patras, Greece.}}

\author{ 
Hossein Esfandiari\thanks{University of Maryland, College Park, MD, USA. Supported in part by NSF CAREER award 1053605, NSF grant CCF-1161626, ONR YIP award N000141110662, DARPA/AFOSR grant FA9550-12-1-0423. {\tt hossein@cs.umd.edu}.}
\and
MohammadTaghi Hajiaghayi\thanks{University of Maryland, College Park, MD, USA. Supported in part by NSF CAREER award 1053605, NSF grant CCF-1161626, ONR YIP award N000141110662, DARPA/AFOSR grant FA9550-12-1-0423. {\tt Hajiagha@cs.umd.edu}.}
\and
Vahid Liaghat \thanks{Management Science and Engineering and  
Institute for Computational and Mathematical Engineering, 
Stanford University. Supported in part by a Google PhD Fellowship. {\tt vliaghat@stanford.edu}.}
\and
Morteza Monemizadeh\thanks{Computer Science Institute of Charles University,
Faculty of Mathematics and Physics, Prague, Czech Republic. Partially supported by the project 14-10003S of GA \v{C}R. Part of this work was done when the author was at Department of Computer Science, Goethe-Universit\"{a}t Frankfurt, Germany and supported in part by MO 2200/1-1. {\tt monemi@iuuk.mff.cuni.cz}.}
}


\usepackage{makeidx}
\usepackage{euscript}
\usepackage{amsmath,amssymb,amsfonts}
\usepackage{dsfont}
\usepackage{latexsym}
\usepackage{subfigure}
\usepackage{graphicx}
\usepackage{times}\usepackage[scaled=0.92]{helvet} 
\usepackage{euler}
\usepackage{fancybox}
\usepackage{fullpage}
\usepackage{hyperref}
\usepackage{paralist}

\usepackage{color}
\usepackage{wrapfig}
\usepackage{tikz}

\usepackage[small,compact]{titlesec}



\newcount\shortyear\newcount\shorthour\newcount\shortminute
\shorthour=\time\divide\shorthour by 60\shortyear=\shorthour
\multiply\shortyear by 60\shortminute=\time\advance\shortminute by
-\shortyear \shortyear=\year\advance\shortyear by -1900

\def\zeit{\number\shorthour:\ifnum\shortminute<10 0\number\shortminute
\else\number\shortminute\fi}





\newenvironment{proof}{\noindent{\bf Proof : \ }}{\hfill$\Box$\par\medskip}

\newtheorem{theorem}{Theorem}
\newtheorem{corollary}[theorem]{Corollary}
\newtheorem{lemma}[theorem]{Lemma}
\newtheorem{proposition}[theorem]{Proposition}

\newtheorem{remark}[theorem]{Remark}

\newenvironment{proofof}[1]{\begin{trivlist} \item {\bf Proof
#1:~~}}
  {\qed\end{trivlist}}


\newcommand{\COMMENTED}[1]{{}}



\newcommand{\OPT}{OPT}

\newcommand{\prob}[1]{\operatorname{Pr}\left[#1\right]}
\newcommand{\ex}[1]{\operatorname{E}\left[#1\right]}

\newcommand{\R}{\ensuremath{\mathbb{R}}}

\newcommand{\eat}[1]{}



\newlength{\savedparindent}
\newcommand{\SaveIndent}{\setlength{\savedparindent}{\parindent}}
\newcommand{\RestoreIndent}{\setlength{\parindent}{\savedparindent}}

\newcommand{\InGrayMiddle}[1]{%
\SaveIndent{} %
\centerline{ \fcolorbox[rgb]{0,0,0}{0.95,0.95,0.95}{
\begin{minipage}{0.8\linewidth} %
\RestoreIndent{}%
#1
\end{minipage}
} } }


\begin{document}

\begin{titlepage}
\maketitle\thispagestyle{empty}


\begin{abstract}

Optimal stopping theory is a powerful tool for analyzing scenarios 
such as online auctions in which we generally require optimizing 
an objective function over the
space of stopping rules for an allocation process under uncertainty. 
Perhaps the most classic problems of stopping theory are the prophet inequality problem
and the secretary problem. 
The classical prophet inequality states that by choosing the same threshold $\OPT/2$ for every step, 
one can achieve the tight competitive ratio of $0.5$. 
On the other hand, for the basic secretary problem, 
the optimal strategy achieves the tight competitive ratio of $1/e\approx 0.36$

In this paper,
we introduce \emph{prophet secretary}, a natural combination of the prophet inequality 
and the secretary problems. 
An example motivation for our problem is as follows. 
Consider a seller that has an item to sell on the market to a set of arriving customers. 
The seller knows the types of customers that may be interested in the item and 
he has a price distribution for each type: the price offered by a customer of a type 
is anticipated to be drawn from the corresponding distribution. However, 
the customers arrive in a random order. Upon the arrival of a customer, 
the seller makes an irrevocable decision whether to sell the item at the offered price. 
We address the question of finding a strategy for selling the item at a high price.
In particular, 
we show that by using a single uniform threshold one cannot break the $0.5$ barrier of the prophet inequality for the prophet secretary problem. However, we show that
\begin{itemize}
\item using $n$ distinct non-adaptive thresholds one can obtain a competitive ratio that goes to $(1-1/e \approx 0.63)$ as $n$ grows; and 
\item no online algorithm can achieve a competitive ratio better than $0.75$.
\end{itemize}
Our results improve the (asymptotic) approximation guarantee of single-item sequential posted pricing mechanisms from $0.5$ to $(1-1/e)$ when the order of agents (customers) is chosen randomly.

We also consider the minimization variants of stopping theory problems and in particular the prophet secretary problem. 
Interestingly, we show that, even for the simple case in which 
the input elements are drawn from identical and independent distributions (i.i.d.), 
there is no constant competitive online algorithm for the minimization 
variant of the prophet secretary problems.
We extend this hardness result to the minimization variants of both the prophet inequality and the secretary problem as well.

\end{abstract}


\end{titlepage}

\section{Introduction}
Optimal stopping theory is a powerful tool for analyzing scenarios
in which we generally require optimizing an objective function over the
space of stopping rules for an allocation process under uncertainty. 
One such a scenario is the online auction which is the essence of many modern markets, particularly
networked markets where information about goods, agents, and
outcomes is revealed over a period of time and the agents must make
irrevocable decisions without knowing future information. 
Combining optimal stopping theory with game theory allows us to
model the actions of rational agents applying competing stopping
rules in an online market.

Perhaps the most classic problems of stopping theory are the \emph{prophet inequality} and the \emph{secretary problem}. 
Research investigating the relation between online auction mechanisms and prophet 
inequalities was initiated by Hajiaghayi, Kleinberg, and Sandholm \cite{HKS07}. They observed that 
algorithms used in the derivation of prophet inequalities, owing to their monotonicity properties, 
could be interpreted as truthful online auction mechanisms and that the prophet inequality 
in turn could be interpreted as the mechanism's approximation guarantee. 
Later Chawla, Hartline, Malec, and Sivan~\cite{CHMS10} showed the applications of prophet inequalities in Bayesian optimal mechansim design problems.
The connection between the secretary problem and online auction mechanisms has been explored  
by Hajiaghayi, Kleinberg and Parkes~\cite{HKP04} 
and initiated several follow-up papers (see e.g.~\cite{BIKK07,BIKK08,BIK07,IKM06,K05}).


\paragraph{Prophet Inequality.}
The classical prophet inequality has been studied in the optimal stopping theory 
since the 1970s when introduced by Krengel and Sucheston \cite{K78,KS77,KS78} and 
more recently in computer science Hajiaghayi, Kleinberg and  Sandholm \cite{HKS07}. 
In the prophet inequality setting, given (not necessarily identical) distributions $\{D_1,\ldots,D_n\}$, an online sequence of 
values $X_1 ,\cdots, X_n$ where $X_i$ is drawn from $D_i$, an onlooker has to choose one item from 
the succession of the values, where $X_k$ is revealed at step $k$.  
The onlooker can choose a value only at the time of arrival. The onlooker's goal is to maximize her revenue. 
The inequality has been interpreted as meaning that a prophet with complete 
foresight has only a bounded advantage over an onlooker who observes 
the random variables one by one, and this explains the name \emph{prophet inequality}. 

An algorithm for the prophet inequality problem can be described by setting a threshold for every step: we stop at the first step that the arriving value is higher than the threshold of that step.
The classical prophet inequality 
states that by choosing the same threshold $\OPT/2$ for every step, 
one achieves the competitive ratio of $1/2$.
Here the optimal solution $\OPT$ is defined as $\ex{\max X_i}$.
Naturally, 
the first question is whether one can beat $1/2$. Unfortunately, this is not possible: let $q = \frac{1}{\epsilon}$, and $q'= 0$. The first value $X_1$ is always $1$. 
The second value $X_2$ is either $q$ with probability $\epsilon$ or $qâ'$ with 
probability $1 -\epsilon$. Observe that the expected revenue of any (randomized) online 
algorithm is $\max(1,\epsilon(\frac{1}{\epsilon})) = 1$. 
However the prophet, i.e., the optimal offline solution
would choose $q'$ if it arrives, and he would choose the first value otherwise. 
Hence, the optimal offline revenue is $(1 - \epsilon) \times 1 + \epsilon(\frac{1}{\epsilon}) \approx 2$. 
Therefore we cannot hope to break the $1/2$ barrier using any online algorithm.


\paragraph{Secretary Problem.}
Imagine that you manage a
company, and you want to hire a secretary from a pool of $n$
applicants. You are very keen on hiring only the best and brightest.
Unfortunately, you cannot tell how good a secretary is until you
interview him, and you must make an irrevocable decision whether or
not to make an offer at the time of the interview. The problem is to
design a strategy which maximizes the probability of hiring the most
qualified secretary. It is well-known since 1963 by Dynkin in \cite{Dynkin63}
that the optimal policy is to interview the first $t-1$ applicants,
then hire the next one whose quality exceeds that of the first $t-1$
applicants, where $t$ is defined by $\sum_{j=t+1}^n \frac{1}{j-1}  
\le 1 < \sum_{j=t}^n \frac{1}{j-1}$. As $n \rightarrow \infty$, the
probability of hiring the best applicant approaches $1/e\approx 0.36$, as does
the ratio $t/n$. Note that a solution to the secretary problem
immediately yields an algorithm for a slightly different objective
function optimizing the expected value of the chosen element.
Subsequent papers have extended the problem by varying the objective
function, varying the information available to the decision-maker,
and so on, see e.g.,~\cite{ajtai01,GHB83,Vanderbei80,Wilson91}.


\subsection{Further Related Work}\label{sec:related}
\paragraph{Prophet Inequality.}
The first generalization of the basic prophet inequality introduced by Krengel 
and Sucheston \cite{K78,KS77,KS78} is \emph{the multiple-choices} 
prophet inequality \cite{Assaf01ratioprophet} in which both the onlooker and the prophet
have $k > 1$ choices. Currently, the best algorithm for this setting is due to Alaei \cite{A11}, 
who gave an online algorithm with  $(1 - \frac{1}{\sqrt{k + 3}})$-competitive ratio 
for $k$-choice optimal stopping. Besides this, we have two generalizations 
for the (multiple-choices) prophet inequality that are  
\emph{matroid} prophet inequality \cite{DBLP:conf/stoc/KleinbergW12} 
and \emph{matching} prophet inequality  \cite{DBLP:conf/sigecom/AlaeiHL12}.

In the matroid prophet inequality, 
we are given a matroid whose elements have random weights 
sampled independently from (not necessarily identical) probability 
distributions on $\R^+$. 
We then run an online algorithm with knowledge of the matroid structure 
and of the distribution of each element's weight. 
The online algorithm must then choose irrevocably an independent
subset of the matroid by observing the sampled value of each
element (in a fixed, prespecified order). 
The online algorithm's payoff is defined to be the
sum of the weights of the selected elements. 
Kleinberg and Weinberg \cite{DBLP:conf/stoc/KleinbergW12} show that 
for every matroid, there is an online algorithm
whose expected payoff is at least half of the expected weight of
the maximum-weight basis. Observe that the original prophet inequality 
introduced by Krengel and Sucheston \cite{K78,KS77,KS78} 
corresponds to the special case of rank-one matroids.

The matching prophet inequality is due to 
Alaei, Hajiaghayi, and Liaghat \cite{alaei2011adcell,DBLP:conf/sigecom/AlaeiHL12,alaei2013online}. 
Indeed, they study the problem of online prophet-inequality matching 
in bipartite graphs. There is a static set of bidders and an
online stream of items. The interest of bidders 
in items is represented by a weighted bipartite graph. Each bidder has a capacity,
i.e., an upper bound on the number of items that can be allocated to him. 
The weight of a matching is the total weight of edges
matched to the bidders. Upon the arrival of an item, the online algorithm 
should either allocate it to a bidder or discard it. 
The objective is to maximize the weight of the resulting matching. 
Here we assume we know the distribution of the incoming items in advance and 
we may assume that the $t^{th}$ item is drawn from distribution $\mathcal{D}_t$. 
They generalize the $\frac{1}{2}$-competitive ratio of Krengel and Sucheston \cite{KS77} 
by presenting an algorithm with an approximation ratio of $1-\frac{1}{\sqrt{k+3}}$ 
where $k$ is the minimum capacity. 
Oberve that the classical prophet inequality is a special case of this model 
where we have only one bidder with capacity one, i.e., $k = 1$ 
for which they get the same $\frac{1}{2}$-competitive ratio.


\paragraph{Secretary Problem.}
The first generalization of the basic secretary problem \cite{Dynkin63} is 
the \emph{multiple-choice secretary problem} \cite{BIKK08} 
(see a survey by Babaioff et al.~\cite{BIKK08})) in which the
interviewer is allowed to hire up to $k\geq 1$ applicants in order
to maximize performance of the secretarial group  based on their
overlapping skills (or the joint utility of selected items in a more
general setting). More formally, assuming applicants of a set $S=
\{a_1, a_2, \cdots, a_n \}$ (applicant pool) arriving in a uniformly
random order, the goal is to select a set of at most $k$ applicants
in order to maximize a non-negative profit function $f:2^{S}\mapsto\R^{\geq 0}$. For example, when $f(T)$
is the maximum individual value~\cite{F83,GM66}, or when $f(T)$ is
the sum of the individual values in $T$~\cite{K05}, the problem has
been considered thoroughly in the literature. 
Beside this, two generalizations for the (multiple-choices) 
secretary problem are {\em submodular secretary} \cite{DBLP:journals/talg/BateniHZ13} 
and {\em matroid secretary}  \cite{BIK07}.

The submodular secretary problem is introduced by Bateni, Hajiaghayi, and Zadimoghaddam 
\cite{DBLP:journals/talg/BateniHZ13}.   
Indeed, both of the maximum individual value~\cite{F83,GM66} and 
the sum of the individual values ~\cite{K05} aforementioned 
are special monotone non-negative submodular functions. 
Bateni, Hajiaghayi, and Zadimoghaddam \cite{DBLP:journals/talg/BateniHZ13} 
give an online algorithm with $(\frac{7}{1-1/e})$-competitive ratio for 
the submodular secretary problem. We should mention that there are more recent   
results with better constant competitive ratio (See for example 
the references in \cite{DBLP:journals/talg/BateniHZ13}).

In the matroid secretary problem considered by Babaioff et al.~\cite{BIK07},
we are given a matroid with a ground set ${\cal U}$ of
elements and a collection of independent (feasible) subsets ${\cal
	I}\subseteq 2^{\cal U}$  describing the sets of elements which can
be simultaneously accepted. The goal is to design online algorithms in which the
structure of ${\cal U}$ and ${\cal I}$ is known at the outset
(assume we have an oracle to answer whether a subset of ${\cal U}$
belongs to ${\cal I}$ or not), while the elements and their values
are revealed one at a time in a random order. As each element is
presented, the algorithm must make an irrevocable decision to select
or reject it such that the set of selected elements  belongs to
${\cal I}$ at all times. Babaioff et al.~\cite{BIK07} present an $O(\log
r)$-competitive algorithm for general matroids, where $r$ is the
rank of the matroid. 
However, they leave as a main open question the existence of constant-competitive
algorithms for general matroids. On the other hand, there are several 
follow-up works including the recent FOCS 2014 paper 
due to Lachish \cite{L14} which gives $O(\log\log \text{rank})$-competitive 
algorithm.


\section{Our Contributions}
In this paper, we introduce \emph{prophet secretary} as a natural combination of the prophet inequality problem
and the secretary problem with applications to the Bayesian optimal mechanism design.
Consider a seller that has an item to sell on the market to a set of arriving customers. 
The seller knows the types of customers that may be interested in the item and 
he has a price distribution for each type: the price offered by a customer of a type 
is anticipated to be drawn from the corresponding distribution. However, 
the customers arrive in a random order. Upon the arrival of a customer, 
the seller makes an irrevocable decision to whether sell the item at the offered price. 
We address the question of maximizing the seller's gain.

More formally, in the prophet secretary problem we are given 
a set $\{D_1,\ldots,D_n\}$ of (not necessarily  identical) distributions. 
A number $X_i$ is drawn from each distribution $D_i$ and then, 
after applying a random permutation $\pi_1,\ldots, \pi_n$, 
the numbers are given to us in an online fashion, i.e., 
at step $k$, $\pi_k$ and $X_{\pi_k}$ are revealed. 
We are allowed to choose only one number, which can be done 
only upon receiving that number. The goal is to maximize 
the expectation of the chosen value, compared to the expectation 
of the optimum offline solution that knows the drawn values 
in advance (i.e., $\OPT=\ex{\max_i X_i}$). For the ease of notation, 
in what follows the index $i$ iterates over the distributions 
while the index $k$ iterates over the arrival steps.

An algorithm for the prophet secretary problem can be described by a sequence 
of (possibly adaptive) thresholds $\langle \tau_1,\ldots, \tau_n \rangle$: 
we stop at the first step $k$ that $X_{\pi_k}\geq \tau_k$. In particular, if the thresholds are non-adaptive, 
meaning that they are decided in advance, the following is a generic description of an algorithm. 
The competitive ratio of the following algorithm is defined as $\frac{\ex{Y}}{\OPT}$.



\mbox{}

\InGrayMiddle{\mbox{}\\ \textbf{Algorithm Prophet Secretary \\}
\textbf{Input:} A set of distributions $\{ D_1,\ldots, D_n \}$; 
a randomly permuted stream of numbers $(X_{\pi_1}, \dots, X_{\pi_n})$ drawn from the corresponding distributions. \\
\textbf{Output:} A number $Y$. \\[-6mm]
\begin{enumerate}
\item Let $\langle \tau_1,\ldots, \tau_n \rangle$ be a sequence of thresholds.
\item For  $k \gets 1$ to $n$
    \begin{enumerate}
    \item If $X_{\pi_k}\geq \tau_k$ then let $Y=X_{\pi_k}$ and exit the {\sf For} loop.
    \end{enumerate}
    \item Output $Y$ as the solution.
\end{enumerate}
\mbox{}\\[-0.35in]}

\mbox{}


Recall that when the arrival order is adversarial, the classical prophet inequality 
states that by choosing the same threshold $\OPT/2$ for every step, 
one achieves the tight competitive ratio of $1/2$. 
On the other hand, for the basic secretary problem where the distributions are not known,
the optimal strategy is to let 
$\tau_1=\dots=\tau_{\frac{n}{e}}=\infty$ and $\tau_{\frac{n}{e}+1}=\dots=\tau_{n}=\max(X_{\pi_1}, \dots, X_{\pi_{\frac{n}{e}}})$. This leads to the optimal competitive ratio of $\frac{1}{e}\simeq 0.36$.
Hence, our goal in the prophet secretary problem is 
to beat the $1/2$ barrier.

We would like to mention that in an extension of this problem, in which the seller has $B$ identical items to sell, it is indeed easier to track the optimal solution since we have multiple choices. In fact, an algorithm similar to that of \cite{DBLP:conf/sigecom/AlaeiHL12,alaei2013online} can guarantee a competitive ratio of $1-\frac{1}{\sqrt{B+3}}$ which goes to one as $B$ grows.

We first show that unlike the prophet inequality setting, 
one cannot obtain the optimal competitive ratio by using a single uniform threshold. 
Indeed, in Section \ref{sec:1:threshold} we show  
that $1/2$ is the best competitive ratio one can achieve with uniform thresholds.	
To beat the $\frac{1}{2}$ barrier, as a warm up we first show in Section \ref{sec:2thresholds} that by using two thresholds 
one can achieve the competitive ratio of $5/9\simeq 0.55$. 
This can be achieved by choosing the threshold $\frac{5}{9}\cdot \OPT$ for the first half of 
the steps and then decreasing the threshold to $\frac{\OPT}{3}$ for the second half of the steps. 
Later in Section~\ref{sec:generaln}, 
we show that by setting $n$ distinct thresholds one can obtain the $(1-1/e \approx 0.63)$-competitive ratio 
for the prophet secretary problem.

\begin{theorem}
\label{thm:n:thresholds}
Let $\langle \tau_1,\ldots, \tau_n \rangle$ be a non-increasing sequence of $n$ thresholds, 
such that (i) $\tau_k=\alpha_k \cdot OPT$ for every $k\in [n]$; (ii) $\alpha_{n}=\frac{1}{n+1}$; 
and (iii) $\alpha_k=\frac{n\alpha_{k+1}+1}{n+1}$ for $k\in [n-1]$. 
The competitive ratio of Algorithm {\sf Prophet Secretary} invoked with thresholds $\tau_k$'s 
is at least $\alpha_1$. When $n$ goes to infinity, $\alpha_1$  converges to 
$1-1/e\approx 0.63$.
\end{theorem}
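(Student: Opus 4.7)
The plan is to bound $\ex{Y}$ from below by decomposing the algorithm's reward at each step into a threshold contribution and an overage contribution, applying the classical prophet inequality to the overages, and telescoping via the recursion for $\alpha_k$. Using $x\mathbf{1}[x\geq\tau] = \tau\mathbf{1}[x\geq\tau] + (x-\tau)^+$, I would first write
\begin{align*}
\ex{Y} = \sum_{k=1}^n \tau_k\, \prob{\kappa=k} + \sum_{k=1}^n \ex{(X_{\pi_k}-\tau_k)^+\, \mathbf{1}[\text{reach step } k]},
\end{align*}
where $\kappa$ is the stopping step. Conditioning on the permutation $\pi$ and exploiting independence of $X_{\pi_k}$ from the earlier values, each overage expectation factors as $\frac{1}{n}\sum_i \ex{(X_i-\tau_k)^+}\cdot P_{i,k}$, where $P_{i,k}:=\prob{\text{no pick in first } k-1 \text{ steps}\mid \pi_k=i}$. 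The classical inequality $OPT\leq \tau_k+\sum_i \ex{(X_i-\tau_k)^+}$ then yields $\sum_i \ex{(X_i-\tau_k)^+}\geq OPT-\tau_k$.

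The main obstacle is to relate the conditional survival probability $P_{i,k}$ to the unconditional one $q_{k+1}:=\prob{\text{reach step } k+1}$. I would establish the pointwise bound $P_{i,k}\geq q_{k+1}$ via a swap coupling that crucially uses the non-increasing property of the thresholds. Fix any value realization and any permutation $\pi$ with no pick in its first $k$ steps; let $\ell=\pi^{-1}(i)$ and form $\pi'$ by swapping positions $k$ and $\ell$, so $\pi'_k=i$. Then $\pi'$ has no pick in its first $k-1$ steps: positions other than $\ell$ agree with $\pi$ and hence are safe, and if $\ell<k$ the element $\pi_k$ that moves to position $\ell$ satisfies $X_{\pi_k}<\tau_k\leq\tau_\ell$, so it still does not trigger a pick. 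Each target permutation has at most $n$ preimages under this map (one per choice of swap partner), which via a counting argument gives $P_{i,k}\geq q_{k+1}$.

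Combining the overage bound and the threshold contribution,
\begin{align*}
\ex{Y} \geq \sum_{k=1}^n \tau_k(q_k-q_{k+1}) + \frac{1}{n}\sum_{k=1}^n q_{k+1}(OPT-\tau_k).
\end{align*}
Substituting $\tau_k=\alpha_k\cdot OPT$ and performing an Abel summation on the first sum, the recursion written as $\alpha_{k-1}-\alpha_k = (1-\alpha_{k-1})/n$ makes the interior terms cancel against those of the overage sum, leaving $\alpha_1\cdot OPT$ plus a boundary residual $q_{n+1}\cdot OPT\cdot\bigl((1-\alpha_n)/n - \alpha_n\bigr)$; this residual vanishes because $\alpha_n=1/(n+1)$ gives $(1-\alpha_n)/n = 1/(n+1) = \alpha_n$. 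Hence $\ex{Y}\geq \alpha_1\cdot OPT$.

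Finally, iterating the recursion yields the closed form $1-\alpha_k=\bigl(\tfrac{n}{n+1}\bigr)^{n-k+1}$, so in particular $\alpha_1 = 1-\bigl(1-\tfrac{1}{n+1}\bigr)^n$, which converges to $1-1/e$ as $n\to\infty$.
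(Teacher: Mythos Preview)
Your proof is correct and follows the same architecture as the paper's: decompose the expected reward into a threshold part and an above-threshold part, bound the latter using the key inequality $P_{i,k}\geq q_{k+1}$ (this is exactly the paper's Proposition~2, $q_{-i}(k-1)\geq\theta(k)$, in your notation), and then choose the $\alpha_k$ via the stated recursion so that all $q_{k+1}$-dependent terms vanish, leaving $\alpha_1\cdot\OPT$. The differences are in packaging rather than substance. The paper works with CDF integrals, splitting $\int_0^\infty\prob{z_k\geq x}\,dx$ at $\tau_k$ (Lemmas~7--10), whereas you use the equivalent expectation identity $x\mathbf{1}[x\geq\tau]=\tau\mathbf{1}[x\geq\tau]+(x-\tau)^+$ and the standard overage bound $\sum_i\ex{(X_i-\tau_k)^+}\geq\OPT-\tau_k$; these are the same inequalities written differently. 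Your swap-coupling proof of $P_{i,k}\geq q_{k+1}$ is a more combinatorial alternative to the paper's explicit permutation-sum manipulation in Proposition~2, and your closed form $1-\alpha_k=(n/(n+1))^{n-k+1}$ gives the limit $\alpha_1\to 1-1/e$ a bit more directly than the paper's geometric-series computation.
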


\begin{remark}
We should mention that Yan in \cite{DBLP:conf/soda/Yan11} 
establishes a 1 - 1/e approximation when the designer is allowed 
to choose the order of arrival. Thus, Theorem \ref{thm:n:thresholds} 
can be viewed as improving that result by showing that a random arrival order is sufficient to obtain the same approximation.
\end{remark}

The crux of the analysis of our algorithm is to compute the probability of picking a value $x$ 
at a step of the algorithm with respect to the threshold factors $\alpha_k$'s. 
Indeed one source of difficulty arises from the fundamental dependency between the steps: 
for any step $k$, the fact that the algorithm has not stopped in the previous steps leads to 
various restrictions on what we expect to see at the step $k$. For example, 
consider the scenario that $D_1$ is $1$ with probability one and $D_2$ is either 
$2$ or $0$ with equal probabilities. Now if the algorithm chooses $\tau_1=1$, 
then it would never happen that the algorithm reaches step two and receives a number 
drawn from $D_2$! That would mean we have received a value from $D_1$ at the first step 
which is a contradiction since we would have picked that number. In fact, the optimal strategy 
for this example is to shoot for $D_2$! We set $\tau_1=2$ so that we can ignore the first value 
in the event that it is drawn from $D_1$. Then we set $\tau_2=1$ so that we can always pick the second value. 
Therefore in expectation we get $5/4$ which is slightly less than $\OPT=6/4$.

To handle the dependencies between the steps, we first distinguish between the events for $k\in [n]$ 
that we pick a value between $\tau_{k+1}$ and $\tau_k$. We show that the expected value 
we pick at such events is indeed highly dependent on $\theta(k)$, the probability of passing the first $k$ elements. 
We then use this observation to analyze competitive ratio with respect to $\theta(k)$'s and 
the thresholds factors $\alpha_k$'s. We finally show that the competitive ratio is indeed maximized 
by choosing the threshold factors described in Theorem~\ref{thm:n:thresholds}. 
In Section~\ref{sec:2thresholds}, we first prove the theorem for the simple case of $n=2$. 
This enables us to demonstrate our techniques without going into the more complicated 
dependencies for general $n$. We then present the full proof of Theorem~\ref{thm:n:thresholds} 
in Section~\ref{sec:generaln}. We would like to emphasize that our algorithm only needs to know the value of $\OPT$, thus requiring only a weak access to the distributions themselves.

As mentioned before, Bayesian optimal mechanism design problems provide a compelling application of prophet inequalities in economics. In such a Bayesian market, we have a set of $n$ agents with private types sampled from (not necessary identical) known distributions. Upon receiving the reported types, a seller has to allocate resources and charge prices to the agents. The goal is to maximize the seller's revenue in equilibrium. Chawla et al.~\cite{CHMS10} pioneered the study the approximability of a special class of such mechanisms, \textit{sequential posted pricing} (SPM): the seller makes a sequence of take-it-or-leave-it offers to agents, offering an item for a specific price. They show although simple, SPMs approximate the optimal revenue in many different settings. 
Therefore prophet inequalities directly translate to approximation factors for the seller's revenue in these settings through standard machineries. Indeed one can analyze the so-called \textit{virtual values} of winning bids introduced by Roger Myerson \cite{myerson1981optimal}, to prove via prophet inequalities that the expected virtual value obtained by the SPM mechanism approximates an offline optimum that is with respect to the exact types. Chawla et al.~\cite{CHMS10} provide a type of prophet inequality in which one can choose the ordering of agents. They show that under matroid feasibility constraints, one can achieve a competitive ratio of $0.5$ in this model, and no algorithm can achieve a ratio better $0.8$. Kleinberg and Weinberg \cite{DBLP:conf/stoc/KleinbergW12} later improved there result by giving an algorithm with the tight competitive ratio of $0.5$ for an adversarial ordering. Our result can be seen as improving their approximation guarantees to $0.63$ for the case of single-item SPMs when the order of agents are chosen randomly.

On the other hand, from the negative side 
the following theorem shows that no online algorithm  
can achieve a competitive ratio better than $0.75$. 
The proof is given in Section \ref{sec:low:bounds}. 

\begin{theorem}
\label{lem:no:alg:2nd}
For any arbitrary small positive number $\epsilon$, 
there is no online algorithm for the prophet secretary problem with competitive ratio $0.75 +\epsilon$.
\end{theorem}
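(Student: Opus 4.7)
The plan is to exhibit a simple two-distribution hard instance, parametrized by a small $\delta > 0$ that I shrink at the end, and show by a direct case analysis that every online algorithm — adaptive or randomized — achieves a competitive ratio of at most $(3-\delta)/(2(2-\delta))$ on this instance, which tends to $3/4$ from above as $\delta\to 0^+$. Given any $\epsilon>0$ from the theorem statement, choosing $\delta$ sufficiently small then forces the ratio below $0.75+\epsilon$.

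Concretely, take $n=2$, let $D_1$ be deterministic equal to $1$, and let $D_2$ equal $1/\delta$ with probability $\delta$ and $0$ otherwise. Then $\max(X_1,X_2)=1/\delta$ with probability $\delta$ and equals $1$ with probability $1-\delta$, so
\[
\OPT \;=\; \ex{\max(X_1,X_2)} \;=\; \delta\cdot(1/\delta) + (1-\delta)\cdot 1 \;=\; 2-\delta.
\]

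Next I would bound the expected reward of an arbitrary online algorithm by conditioning on which of $D_1,D_2$ arrives first, each case with probability $1/2$. If $D_2$ arrives first, the algorithm should take $1/\delta$ whenever offered (any other action is strictly dominated, since the best future payoff is only $1$) and otherwise fall through to the deterministic value $1$ at step $2$, yielding conditional expectation $\delta\cdot(1/\delta) + (1-\delta)\cdot 1 = 2-\delta$. If $D_1$ arrives first, the algorithm is offered the value $1$ and must decide whether to accept or pass; passing yields conditional expected value $\delta\cdot(1/\delta) + (1-\delta)\cdot 0 = 1$, which coincides exactly with the value of accepting. So in this branch the conditional expectation is $1$ regardless of the choice, and averaging gives total expected reward $\tfrac{1}{2}(2-\delta) + \tfrac{1}{2}\cdot 1 = (3-\delta)/2$ for every deterministic strategy, and hence (by convexity) for every randomized strategy as well.

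Dividing by $\OPT=2-\delta$ gives the promised bound on the competitive ratio, and taking $\delta\to 0^+$ completes the argument. There is no serious technical obstacle; the only conceptual content is the observation that when $D_1$ arrives first, the algorithm is genuinely \emph{indifferent} between acting and waiting — a coincidence engineered into the instance by matching the deterministic value $1$ to the expected value $\delta\cdot(1/\delta)=1$ of the large payoff — which denies every strategy (deterministic or randomized, threshold-based or otherwise) any useful lever in that branch, so the upper bound holds uniformly over all online algorithms.
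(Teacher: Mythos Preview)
Your proof is correct and follows essentially the same approach as the paper: the identical two-distribution instance ($D_1\equiv 1$, $D_2=1/\delta$ with probability $\delta$ and $0$ otherwise), the same conditioning on which distribution arrives first, and the same arithmetic leading to $(3-\delta)/(2(2-\delta))\to 3/4$. The only cosmetic difference is that the paper invokes the known adversarial prophet-inequality hardness to bound the $D_1$-first branch by $1$ and bounds the $D_2$-first branch trivially by $\OPT=2-\delta$, whereas you compute both branches explicitly.
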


We also consider the minimization variants of the prophet inequality problem, the prophet secretary problem and the secretary problem. In the minimization variant, we need to select one element of the input and we aim to minimize the expected value of the selected element. 
In particular, we show that, even for the simple case in which 
numbers are drawn from \emph{identical and independent distributions (i.i.d.)}, 
there is no $\frac {(1.11)^n} 6$ competitive online algorithm for the minimization variants of the prophet inequality and prophet secretary problems. 

\begin{theorem}
\label{lem:no:alg:3rd}
The competitive ratio of any online algorithm for the minimization prophet inequality 
with $n$ identical and independent distributions is bounded by $\frac {(1.11)^n} 6$. This bound holds for the minimization prophet secretary problem as well.
\end{theorem}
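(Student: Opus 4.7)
Because the $n$ input variables are i.i.d., a uniformly random permutation of the sequence has the same joint distribution as the original, so the minimization prophet secretary problem with i.i.d.\ distributions coincides (in expectation, for every online algorithm) with the minimization prophet inequality on the same input. It therefore suffices to exhibit a single i.i.d.\ distribution on which every online algorithm loses a factor of at least $(1.11)^n/6$ relative to the offline minimum, and both statements in the theorem will follow.

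I would use the staircase distribution $\mathcal{D}$ defined by $\prob{X = 2^{-k}} = \tfrac{1}{2n}$ for $k = 0, 1, \dots, n-1$ together with $\prob{X = 2} = \tfrac{1}{2}$. The $n$ geometrically spaced ``small'' levels create a long staircase that any online threshold must descend, while the large value $2$ absorbs the bulk of the probability and keeps each single sample ``small'' only with probability $\tfrac{1}{2}$. For this distribution, the minimum of $n$ copies concentrates near $2^{-(n-1)}$: a direct calculation using $\prob{\min_i X_i > 2^{-k}} = (1 - F(2^{-k}))^n$ together with the fact that each level carries mass $\tfrac{1}{2n}$ yields an exponentially small upper bound of the form $\ex{\min_i X_i} \leq C \cdot c^{n}$ for some absolute $C$ and some $c < 1/1.11$.

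For the online side I would compare any algorithm with the \emph{optimal} online algorithm, whose expected value $V_n$ is governed by the recursion $V_1 = \ex{X}$ and $V_{k+1} = \ex{\min(X, V_k)}$. The goal is to show $V_n \geq c' \cdot (1/1.11)^n$ for some positive constant $c'$. I would prove this by induction on $k$, tracking the level index $\ell_k$ defined by $V_k \in (2^{-\ell_k - 1}, 2^{-\ell_k}]$ and bounding the single-step drop $V_k - V_{k+1} = \ex{(V_k - X)^+}$: the staircase design of $\mathcal{D}$ forces the drop to be at most a factor $1/1.11$ of $V_k$ regardless of which geometric layer the current threshold lies in. Combining the two bounds gives $V_n/\ex{\min_i X_i} \geq (1.11)^n/6$, and since every online algorithm has expected value at least $V_n$, no online algorithm beats competitive ratio $(1.11)^n/6$.

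The main technical obstacle is the inductive upper bound on the per-step drop of $V_k$. A naive estimate $\ex{(V_k - X)^+} \leq V_k \cdot F(V_k)$ is too loose and dissipates the exponential. The sharp analysis must exploit the fact that when $V_k$ sits in layer $\ell_k$, the conditional expectation $\ex{X \mid X \leq V_k}$ is only a factor roughly $1 - 1/(2\cdot 1.11)$ smaller than $V_k$, so $V_{k+1}$ remains in the same or the next layer. Carefully tuning the base of the geometric and the probability weight $\tfrac{1}{2n}$ pins the constant to $1.11$, and the denominator $6$ absorbs the small-$n$ slack and the constants coming from $\ex{X}$ and the large-value term.
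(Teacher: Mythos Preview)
Your proposal takes a very different route from the paper and, as written, has a real gap.

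The paper's construction is far simpler: each $X_i$ is i.i.d.\ on three atoms, taking the values $0$, $1$, and $2^n$ each with probability $1/3$. The offline minimum then satisfies $\ex{\min_i X_i}<2^{n+1}/3^n$. For the online side one may assume w.l.o.g.\ that the algorithm accepts $0$ immediately and never accepts $2^n$ before the last step, so it is described by a \emph{single} integer parameter $i$: the first step at which it is willing to accept the value $1$. A direct count gives expected online cost at least $(2^i 3^{\,n-i-1}+3^i)/3^n$, and a one-line case split at $i\approx 0.73n$ yields the ratio $(1.11)^n/6$. There is no recursion, no layer tracking, and no tuning.

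Your plan instead introduces an $(n{+}1)$-atom staircase and analyzes the optimal online value through the Bellman recursion $V_{k+1}=\ex{\min(X,V_k)}$. This is a legitimate general technique, but the proposal leaves precisely the hard step undone: you flag the per-step bound on $V_k-V_{k+1}$ as the ``main technical obstacle'' and then assert that the sharp estimate ``pins the constant to $1.11$'' without carrying it out. In fact, for your specific parameters (base $2$, weight $1/(2n)$), when $V_k$ is in a high layer one computes $V_{k+1}/V_k\approx \tfrac12+O(1/n)$, so the per-step factor is close to $2$, not $1.11$; your inductive claim that the drop is ``at most a factor $1/1.11$'' is therefore false in the early steps for the very distribution you propose. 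A coarser heuristic (tracking the layer index via an ODE) does suggest your construction yields \emph{some} exponential ratio---possibly even larger than $1.11^n$---but nothing in the proposal explains how the precise constants $(1.11)^n/6$ would emerge, and there is no reason they should for your particular staircase. The right move here is the paper's: choose the distribution so that the entire online analysis collapses to one integer parameter rather than an $n$-step recursion.
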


Furthermore, we empower the online algorithm and assume that the online algorithm can withdraw and change its decision once. Indeed, for the minimization variants of all, prophet secretary problem, secretary problem and prophet inequality, we show that there is no $C$ competitive algorithm, where $C$ is an arbitrary large number. 
The proof of Theorem \ref{lem:no:alg:3rd} and this last result are given in Section \ref{sec:low:min}.


\section{Preliminaries}
We first define some notation. 
For every $k\in [n]$, let $z_k$ denote the random variable that shows the value we pick at the $k^{th}$ step. 
Observe that for a fixed sequence of drawn values and a fixed permutation, at most one of $z_k$'s is non-zero 
since we only pick one number. Let $z$ denote the value chosen by the algorithm. 
By definition, $z=\sum_{k=1}^n z_k$. In fact, since all but one of $z_k$'s are zero, we have the following proposition. 
We note that since the thresholds are deterministic, the randomness comes from the permutation $\pi$ and the distributions.

\begin{proposition}\label{prop:zbreak}
$\prob{z\geq x}= \sum_{k\in [n]} \prob{z_k \geq x}$.
\end{proposition}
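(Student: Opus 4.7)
The plan is to exploit the structural observation the paper just made: in any realization (fixed draws and fixed permutation $\pi$), the algorithm stops at most once, so at most one of the random variables $z_1,\ldots,z_n$ is non-zero, and the rest are identically $0$. This almost-sure disjointness of supports is the only fact we need.

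First I would argue that for $x>0$ the events $E_k=\{z_k\geq x\}$ are pairwise disjoint. Indeed, on the event $E_k\cap E_j$ with $k\neq j$, both $z_k$ and $z_j$ would be at least $x>0$, contradicting the fact that at most one of the $z_\ell$'s is non-zero in every outcome. Second, I would identify $\{z\geq x\}$ with $\bigcup_k E_k$: since $z=\sum_\ell z_\ell$ and all summands are non-negative with at most one non-zero, we have $z=\max_\ell z_\ell$, so $z\geq x$ iff $z_k\geq x$ for some $k$.

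Combining these two observations and applying countable (finite) additivity of $\Pr$ to the disjoint union yields
\begin{equation*}
\prob{z\geq x}=\prob{\bigcup_{k\in[n]} E_k}=\sum_{k\in[n]}\prob{E_k}=\sum_{k\in[n]}\prob{z_k\geq x},
\end{equation*}
which is the claim.

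There isn't really an obstacle here; the only subtle point worth flagging is the implicit assumption $x>0$ (or, more generally, $x$ larger than the ``no-pick'' sentinel value $0$ used for undefined $z_k$), without which the $E_k$ would not be disjoint and the identity would fail. I would state this caveat explicitly so the reader sees that Proposition~\ref{prop:zbreak} is really a restatement of ``$z=\max_k z_k$ almost surely'' together with disjointness, rather than anything about the thresholds or the permutation distribution.
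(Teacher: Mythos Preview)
Your argument is correct and is exactly the formalization of what the paper leaves implicit: the paper simply states the proposition as a direct consequence of the observation that at most one of the $z_k$'s is non-zero, and your disjoint-events argument is the standard way to unpack that. Your explicit caveat about $x>0$ is a nice addition that the paper glosses over (they only ever invoke the identity inside integrals over $(0,\infty)$, so it never bites).
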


For every $k\in [n]$, let $\theta(k)$ denote the probability that Algorithm {\sf Prophet Secretary} 
does not choose a value from the first $k$ steps. For every $i\in [n]$ and $k\in [n-1]$, let $q_{-i}(k)$ denote 
the probability that event (i) happens conditioned on event (ii).
\begin{enumerate}[(i)]
\item Algorithm {\sf Prophet Secretary} does not choose a value from the first $k$ elements. 
\item None of the first $k$ values are drawn from $D_i$.
\end{enumerate}

\begin{proposition}\label{prop:thetaq}
If the thresholds of Algorithm {\sf Prophet Secretary} are non-increasing, then
for every $i\in [n]$ and $k\in [n-1]$, we have $\theta(k+1)\leq q_{-i}(k)$.
\end{proposition}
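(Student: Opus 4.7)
The plan is to prove $\theta(k+1)\le q_{-i}(k)$ by a coupling between the two random experiments underlying the two probabilities: the first is a uniformly random permutation $\pi$ of $[n]$, and the second is a uniformly random permutation conditioned on having $i$ in position $k+1$.

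The first step is to simplify $q_{-i}(k)$. Under a uniformly random permutation conditioned on $D_i$ not appearing in the first $k$ positions, positions $k+1,\ldots,n$ remain exchangeable, so further conditioning on $\pi_{k+1}=i$ leaves the joint distribution of $\pi_1,\ldots,\pi_k$ unchanged. Consequently $q_{-i}(k)$ equals the probability that Algorithm Prophet Secretary does not stop in the first $k$ steps, conditioned on $\pi_{k+1}=i$.

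Next I will construct the coupling. Draw a uniformly random permutation $\pi$ of $[n]$, let $\ell=\pi^{-1}(i)$, and define $\pi'$ by swapping the entries of $\pi$ at positions $\ell$ and $k+1$ (no swap if $\ell=k+1$). By construction $\pi'_{k+1}=i$. A short counting check shows that every target permutation $\sigma$ with $\sigma_{k+1}=i$ is hit by exactly $n$ equally likely preimages of the swap map (one for each choice of the original position of $i$ in $\pi$), so $\pi'$ is distributed uniformly on permutations with $i$ at position $k+1$. Hence $\pi$ realizes the $\theta(k+1)$ experiment and $\pi'$ realizes the $q_{-i}(k)$ experiment on a single probability space.

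Finally I will establish the deterministic implication: if $X_{\pi_j}<\tau_j$ for every $j\in[k+1]$, then $X_{\pi'_j}<\tau_j$ for every $j\in[k]$. The two permutations differ only at positions $\ell$ and $k+1$, so if $\ell>k$ the first $k$ entries coincide and the conclusion is trivial. If $\ell\le k$, then $\pi_\ell=i$ and $\pi'_\ell=\pi_{k+1}$; the hypothesis at $j=k+1$ yields $X_{\pi_{k+1}}<\tau_{k+1}$, and the non-increasing thresholds give $\tau_{k+1}\le\tau_\ell$, so $X_{\pi'_\ell}=X_{\pi_{k+1}}<\tau_\ell$. Taking probabilities yields $\theta(k+1)\le q_{-i}(k)$. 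The main subtlety I expect lies in the marginal check on $\pi'$ — without it the coupling would be invalid — whereas the monotonicity step exploiting the non-increasing thresholds is straightforward.
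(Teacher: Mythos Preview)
Your proof is correct. The underlying combinatorial insight is the same as the paper's---if $i$ occupies some position $\ell\le k$, move the element at position $k+1$ into slot $\ell$ and use $\tau_{k+1}\le\tau_\ell$---but the packaging is genuinely different. The paper writes $\theta(k+1)$ as an explicit sum over ordered $(k+1)$-tuples, partitions that sum according to the position of $i$, and then bounds each piece by the sum defining $q_{-i}(k)$ through direct algebraic manipulation of the products $\prod_j \prob{X_{s(j)}<\tau_j}$. You instead build a coupling: swap $i$ to position $k+1$, verify the resulting permutation has the right marginal, and reduce everything to a single pointwise implication between events. Your route is shorter and more conceptual, and it makes transparent why the inequality is ``obvious'' (passing $k+1$ values is harder than passing $k$ values with one distribution removed); the paper's route is more computational but has the virtue of being entirely self-contained and not requiring the reader to check the marginal of the swap map. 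Both proofs use the monotonicity of the thresholds at exactly the same point.
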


\begin{proof}
In what follows let $i\in [n]$ be a fixed value. The claim is in fact very intuitive: $q_{-i}(k)$ 
is the probability of the event that the algorithm passes $k$ values chosen from all distributions but $D_{i}$. On the other hand,
$\theta(k+1)$ corresponds to the event that the algorithm passes $k+1$ values chosen from all distributions. 
Intuitively, in the latter we have surely passed $k$ values chosen from all but $D_i$. 
Therefore $\theta(k+1)$ cannot be more than $q_{-i}(k)$.

Formalizing the intuition above, however, requires an exact formulation of the probabilities. 
For a permutation $s$ of size $k$ of $[n]$, let $s(j)$, for $j\in [k]$, denote the number at position $j$ of $s$.
For $k\in [n]$, let $S(k)$ denote the set of permutations of size $k$ of $[n]$. Let $S_{-i}(k)$ 
denote the set of permutations of size $k$ of $[n]\backslash \{i\}$.
Observe that
\begin{align*}
\left|S(k)\right| = \frac{n!}{(n-k)!} && \text{ and } &&\left|S_{-i}(k)\right| = \frac{(n-1)!}{(n-1-k)!}
\end{align*}
In particular, we note that $|S(k+1)|=n |S_{-i}(k)|$. We can now write down the exact formula for $q_{-i}(k)$ and $\theta(k+1)$.
\begin{align}
\theta(k+1)=\frac{1}{|S(k+1)|} \sum_{s\in S(k+1)} \prod_{j\in [k+1]} \prob{X_{s(j)}< \tau_j} \label{eq:thetak} \\
q_{-i}(k)=\frac{1}{|S_{-i}(k)|} \sum_{s\in S_{-i}(k)} \prod_{j\in [k]} \prob{X_{s(j)}< \tau_j} \label{eq:qk}
\end{align}

We now show that $\theta(k+1)$ can be written down as a convex combination of terms that are less than $q_{-i}(k)$. 
For every $\ell\in [k]$, let $S(k+1,\ell)=\{s\in S(k+1) | s(\ell)=i \}$. We have 

\[
\begin{split}
  \sum_{s: S(k+1,\ell)} \prod_{j\in [k+1]} \prob{X_{s(j)}\leq \tau_j} 
  & \leq \sum_{s: S(k+1,\ell)} \prod_{j\in [k+1]\backslash \{\ell\}} \prob{X_{s(j)}\leq \tau_j}\\ 
  & \leq \sum_{s: S(k+1,\ell)} \prod_{j\in [k]\backslash \{\ell\}} \prob{X_{s(j)}\leq \tau_j} \times \prob{X_{s(k+1)}\leq \tau_{\ell}}\\
	&= \sum_{s: S_{-i}(k)} \prod_{j\in [k]} \prob{X_{s(j)}\leq \tau_j}  \enspace , 
\end{split}
\]
where the first inequality is since $\prob{X_i<\tau_{\ell}}\leq 1$ and 
the second inequality is because $\prob{X_{s(k+1)}\leq \tau_{k+1}}\leq \prob{X_{s(k+1)}\leq \tau_{\ell}}$ 
and since $\tau_{k+1}\leq \tau_{\ell}$.   

Thus, using Equation~\ref{eq:qk} we have

\begin{align}
\sum_{s: S(k+1,\ell)} \prod_{j\in [k+1]} \prob{X_{s(j)}\leq \tau_j} = \sum_{s: S_{-i}(k)} \prod_{j\in [k]} \prob{X_{s(j)}\leq \tau_j}   
=|S_{-i}(k)| q_{-i}(k) \enspace .
\label{eq:qconvert}
\end{align}

Equation~\ref{eq:qconvert} establishes the relation to $q_{-i}(k)$ for members of $S(k+1)$ 
that contain $i$ in one of the first $k$ positions. 

Let $\overline{S(k+1)}=S(k+1)\backslash \bigcup_{\ell\in [k]} S(k+1,\ell)$. 
We then have 

\[
\begin{split}
   \sum_{s: \overline{S(k+1)}} \prod_{j\in [k+1]\backslash \{\ell\}} \prob{X_{s(j)}\leq \tau_j} 
	& =\sum_{s: \overline{S(k+1)}} \prod_{j\in [k]\backslash \{\ell\}} \prob{X_{s(j)}\leq \tau_j} \times \prob{X_{s(k+1)}\leq \tau_{k+1}}\\
	& \le (n-k) \sum_{s: S_{-i}(k+1)} \prod_{j\in [k]\backslash \{\ell\}} \prob{X_{s(j)}\leq \tau_j}\\
	& =(n-k) |S_{-i}(k)| q_{-i}(k) \enspace,
\end{split}
\]
where the inequality is since $\prob{X_{s(k+1)}\leq \tau_{k+1}}\leq 1 $. 

Finally using this last equation and Equations \ref{eq:thetak} and \ref{eq:qconvert} 
and  since $|S(k+1)|=n  |S_{-i}(k)|$ we obtain 

\[
\begin{split}
  \theta(k+1) &=\quad \frac{1}{|S(k+1)|} \sum_{s\in S(k+1)} \prod_{j\in [k+1]} \prob{X_{s(j)}< \tau_j}\\
	& \frac{1}{|S(k+1)|}\left( \sum_{\ell\in [k]} \sum_{s\in S(k+1,\ell)} \prod_{j\in [k+1]} \prob{X_{s(j)}< \tau_j}  
	+ \sum_{s\in \overline{S(k+1)}} \prod_{j\in [k+1]} \prob{X_{s(j)}< \tau_j} \right) \\
	&\leq \frac{1}{|S(k+1)|}\left(k |S_{-i}(k)| q_{-i}(k) + (n-k) |S_{-i}(k)| q_{-i}(k) \right)\\
	&= q_{-i}(k) \enspace .
\end{split}
\]

\end{proof}


\section{One Threshold Cannot Break $\frac{1}{2}$ Barrier for Prophet Secretary}
\label{sec:1:threshold}
To illustrate that considering at least $2$ thresholds is necessary to beat $\frac{1}{2}$ barrier 
for the prophet secretary problem, we first give an example that shows achieving better than 
$\frac{1}{2}$-competitive ratio for any online algorithm that uses only one threshold for 
the prophet secretary problem is not possible. 
 
\begin{theorem}
\label{thm:1:threshold:no}
There is no online algorithm for the prophet secretary problem that uses one threshold and can achieve 
competitive ratio better than $0.5 + \frac{1}{2n}$. 
\end{theorem}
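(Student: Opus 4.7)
The plan is to exhibit a parametric family of instances on which the best single-threshold algorithm has competitive ratio tending to $1/2+1/(2n)$ from above. Concretely, I would fix a large parameter $M$ and set $D_1$ to be $M$ with probability $1/M$ and $0$ otherwise, while $D_2=\cdots=D_n=1$ deterministically. Since $\max_i X_i$ equals $M$ exactly when $X_1=M$ and equals $1$ otherwise, I would immediately compute $\OPT = (1/M)\cdot M + (1-1/M)\cdot 1 = 2-1/M$.

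Next I would do a case analysis on the (single) threshold $\tau$, splitting according to the support $\{0,1,M\}$ of the distributions. Three of the cases are easy: for $\tau\le 0$ the algorithm picks the first arrival, whose expected value is $\tfrac{1}{n}\sum_i \ex{X_i}=1$; for $\tau>M$ the algorithm never picks, so $\ex{Y}=0$; and for $1<\tau\le M$ only the realization $X_1=M$ can be picked, giving $\ex{Y}=(1/M)\cdot M=1$. The interesting case is $0<\tau\le 1$, where the algorithm picks the first arriving value that is $\ge 1$. Conditioning on whether $D_1$ arrives first (a $1/n$ event under the uniform random permutation), and in that sub-case splitting on whether $X_1=M$ or $X_1=0$ (in the latter case the algorithm skips $D_1$ and still picks a deterministic $1$ at the next step), I would obtain $\ex{Y} = \tfrac{1}{n}(2-1/M) + \tfrac{n-1}{n} = 1 + 1/n - 1/(nM)$.

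Taking the maximum over $\tau$ then gives algorithm value $1+1/n-1/(nM)$, so the competitive ratio on this instance is $(1+1/n-1/(nM))/(2-1/M)$, which tends to $(n+1)/(2n)=1/2+1/(2n)$ as $M\to\infty$. Hence no single-threshold algorithm can guarantee ratio exceeding $1/2+1/(2n)$. The construction itself is elementary; the only real subtlety is in the $0<\tau\le 1$ case, where one must carefully account for the event that $D_1$ arrives first yet realizes to $0$. In that event the algorithm skips $D_1$ and still collects value $1$ from a later deterministic distribution, and this is precisely the small benefit that random arrival confers over the worst-case prophet-inequality bound of $1/2$, accounting for the additive $1/(2n)$ term.
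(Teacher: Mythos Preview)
Your proof is correct and follows essentially the same approach as the paper: exhibit an instance with one high-variance ``risky'' distribution together with several identical deterministic distributions, then case-split on the threshold. The only cosmetic differences are that the paper uses $n+1$ distributions with specific numerical values (deterministic value $\tfrac{1}{1-1/n}$ and risky value $n$) so that $\OPT=2$ exactly, whereas you use $n$ distributions with a free parameter $M\to\infty$; your accounting in the $0<\tau\le 1$ case for the event that the risky distribution arrives first but realizes to $0$ (so the algorithm still collects a deterministic $1$) is in fact more careful than the paper's write-up of the analogous case.
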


\begin{proof}
Suppose we have $n+1$ distributions where the first $n$ distributions always gives $\frac 1 {1-1/n}$ 
and the $(n+1)^{th}$ distribution gives $n$ with probability $\frac 1 n$ and gives $0$ with probability $1-\frac 1 n$. 
Therefore, with probability $\frac 1 n$, the maximum is $n$, and 
with probability $1-\frac 1 n$, the maximum is $\frac 1 {1-1/n}$. 
Thus, the expected outcome of the offline optimum algorithm is $\frac 1 n \times n + (1-\frac 1 n)\times \frac 1 {1-1/n} = 2$.

Now, suppose we have an online algorithm that uses one threshold, say $T$ for a number $T$, that is the online algorithm 
accepts the first number greater or equal to a threshold  $T$. 
We consider two cases for $T$.  
The first case is if $T> \frac 1 {1-1/n}$ for which the algorithm does not accept $\frac 1 {1-1/n}$ and thus, 
the expected outcome of such an algorithm is $\frac 1 n \times n = 1$.  

The second case is if $T \leq \frac 1 {1-1/n}$. 
Observe that, with probability $\frac n {n+1}$, the first number is $\frac 1 {1-1/n}$ and the online algorithm accepts it. 
And, with probability $\frac 1 {n+1}$, the distribution that gives $n$ with probability $\frac 1 n$ will be the first 
and the outcome of the algorithm is $1$. 
Thus, the expected outcome of the online algorithm is 
$$\frac n {n+1} \times \frac 1 {1-1/n} + \frac 1 {n+1} \times 1 = \frac {n^2} {n^2-1} + \frac 1 {n+1} \leq 1 + \frac 1 n \enspace .$$

Therefore, the competitive ratio of the online algorithms that uses only one threshold is bounded by 
$\frac {1 + 1/n} 2 = 0.5 + \frac{1}{2n}$.
\end{proof}


\section{Two Thresholds Breaks $\frac{1}{2}$ Barrier} 
\label{sec:2thresholds}
Since using one threshold is hopeless, we now try using two thresholds. 
More formally, for the first half of steps, we use a certain threshold, and then 
we use a different threshold for the rest of steps. We note that similar to the one-threshold algorithm, 
both thresholds should be proportional to $\OPT$.
Furthermore, at the beginning we should be optimistic and try to have a higher threshold, 
but if we cannot pick a value in the first half, we may need to lower the bar! 
We show that by using two thresholds one can indeed achieve the competitive ratio of 
$\frac{5}{9}\simeq 0.55$. In fact, this improvement beyond $1/2$ happens even at $n=2$. 
Thus as a warm up before analyzing the main algorithm with $n$ thresholds, we focus on the case of $n=2$.

Let $\tau_1=\alpha_1 \OPT$ and $\tau_2=\alpha_2 \OPT$ for some $1\geq \alpha_1 \geq \alpha_2 \geq 0$ 
to be optimized later. Recall that $z_1$ and $z_2$ are the random variables showing the values picked up 
by the algorithm at step one and two, respectively. We are interested in comparing $\ex{z}$ with $\OPT$. 
By Proposition~\ref{prop:zbreak} we have
\begin{align*}
\ex{z} &= \int_{0}^{\infty} \prob{z\geq x} dx
 = \int_{0}^{\infty} \prob{z_1\geq x} dx + \int_{0}^{\infty} \prob{z_2\geq x} dx \enspace .
\end{align*}

Observe that $z_1$ (resp. $z_2$) is either zero or has a value more than $\tau_1$ (resp. $\tau_2$). 
In fact, since $\tau_1\geq \tau_2$, $z$ is either zero or has a value more than $\tau_2$. 
Recall that $\theta(1)$ is the probability of $z_1=0$ while $\theta(2)$ is the probability of $z_1=z_2=0$. 
This observation leads to the following simplification:
\begin{align*}
\ex{z} &= \int_{0}^{\tau_2} \prob{z_1\geq x} dx
+\int_{\tau_2}^{\tau_1} \prob{z_1\geq x} dx
+ \int_{\tau_1}^{\infty} \prob{z_1\geq x} dx 
+ \int_{0}^{\tau_2} \prob{z_2\geq x} dx
+\int_{\tau_2}^{\infty} \prob{z_2\geq x} dx\\
&= \int_{0}^{\tau_2} \prob{z\geq x} dx
+\int_{\tau_2}^{\tau_1} \prob{z_1\geq x} dx
\quad + \int_{\tau_1}^{\infty} \prob{z_1\geq x} dx 
+\int_{\tau_2}^{\infty} \prob{z_2\geq x} dx\\
	&= \tau_2(1-\theta(2)) + (\tau_1-\tau_2) (1-\theta(1)) + \int_{\tau_1}^{\infty} \prob{z_1\geq x} dx + \int_{\tau_2}^{\infty} \prob{z_2\geq x} dx \enspace .
\end{align*}

Let us first focus on $\prob{z_1\geq x}$. The first value may come from any of the two distributions, thus we have
\begin{align*}
\prob{z_1\geq x}=\frac{1}{2}\prob{X_1\geq x}+\frac{1}{2}\prob{X_2\geq x} \enspace .
\end{align*}

On the other hand, $z_2$ is non-zero only if we do not pick anything at the first step. 
For $i\in \{1,2\}$, we pick a value of at least $x$ drawn from $D_i$ at step two, if and only if:
(i) the value drawn from $D_i$ is at least $x$; and (ii) our algorithm does not pick a value 
from the previous step which is drawn from the other distribution. By definitions, 
the former happens with probability $\prob{X_i\geq x}$, while the latter happens with probability $q_{-i}(1)$. 
Since these two events are independent we have
\begin{align*}
\prob{z_2\geq x}=\frac{1}{2}\sum_{i\in \{1,2\}} q_{-i}(1) \prob{X_i\geq x} \geq \frac{\theta(2)}{2} \sum_i \prob{X_i\geq x} \enspace .
\end{align*}
where the last inequality follows from Proposition~\ref{prop:thetaq}, although the proposition is trivial for $n=2$. 
We can now continue analyzing $\ex{z}$ from before

\begin{align*}
\ex{z} &= \tau_2(1-\theta(2)) + (\tau_1-\tau_2) (1-\theta(1)) + \int_{\tau_1}^{\infty} \prob{z_1\geq x} dx + \int_{\tau_2}^{\infty} \prob{z_2\geq x} dx \\
&\geq \tau_2(1-\theta(2)) + (\tau_1-\tau_2) (1-\theta(1)) 
+ \frac{\theta(1)}{2} \int_{\tau_1}^{\infty} \sum_i \prob{X_i\geq x} dx + \frac{\theta(2)}{2} \int_{\tau_2}^{\infty} \sum_i \prob{X_i\geq x} dx \enspace . 
\end{align*}

We note that although the $\theta(1)$ factor is not required in the third term of the last inequality, 
we include it so that the formulas can have the same formation as in the general formula of the next sections.

It remains to bound $\int_{\tau_k}^{\infty} \sum_i \prob{X_i\geq x}$ for $k\in \{1,2\}$. 
Recall that $\OPT=\ex{\max_i X_i}$. 
Hence for every $k\in \{1,2\}$ and since $\prob{\max X_i \geq x}\leq 1$ we have 
\[
\OPT =\int_{0}^{\tau_k} \prob{\max X_i \geq x} dx + \int_{\tau_k}^{\infty} \prob{\max X_i \geq x}dx
	\leq \tau_k + \int_{\tau_k}^{\infty} \prob{\max X_i \geq x}dx \enspace .
\]

From $\tau_k=\alpha_k \OPT $ and since $\prob{\max X_i \geq x}\leq \sum_i \prob{X_i\geq x} dx$ we then have 

\[
(1-\alpha_k) \OPT \leq \int_{\tau_k}^{\infty} \prob{\max X_i \geq x}dx	
\leq \int_{\tau_k}^{\infty} \sum_i \prob{X_i\geq x} dx		\enspace .
\]

Therefore we get

\[
  \begin{split}
     \ex{z} &\geq \tau_2(1-\theta(2)) + (\tau_1-\tau_2) (1-\theta(1)) 
		+ \frac{\theta(1)}{2} \int_{\tau_1}^{\infty} \sum_i \prob{X_i\geq x} dx + \frac{\theta(2)}{2} \int_{\tau_2}^{\infty} \sum_i \prob{X_i\geq x} dx \\
&\geq (\alpha_2 \OPT)(1-\theta(2)) + (\alpha_1-\alpha_2) \OPT (1-\theta(1))
+ \frac{\theta(1)}{2} (1-\alpha_1) \OPT+ \frac{\theta(2)}{2} (1-\alpha_2) \OPT\\ 
&=\OPT \left( \alpha_1 + \theta(1) (\frac{1+2\alpha_2-3\alpha_1}{2}) + \theta(2) (\frac{1-3\alpha_2}{2})  \right) \enspace .
 \end{split}
\]

Therefore by choosing $\alpha_2=1/3$ and $\alpha_1=5/9$, 
the coefficients of $\theta(1)$ and $\theta(2)$ become zero, leading to the competitive ratio of $5/9\simeq 0.55$. 
In the next section, we show how one can generalize the arguments to the case of $n$ thresholds for arbitrary $n$.


\section{$(1-\frac{1}{e}\approx 0.63)$-Competitive Ratio Using $n$ Thresholds} \label{sec:generaln}
In this section we prove our main theorem. 
In particular, we invoke Algorithm {\sf Prophet Secretary} with $n$ distinct thresholds $\tau_1,\ldots, \tau_n$. 
The thresholds $\tau_1,\ldots, \tau_n$ that we consider are \textit{non-adaptive} (i.e., 
Algorithm {\sf Prophet Secretary} is oblivious to the history) and \textit{non-increasing}. 
Intuitively, this is because as we move to the end of stream we should be more pessimistic 
and use lower thresholds to catch remaining higher values. 

Formally, for every $k\in [n]$, we consider threshold $\tau_k=\alpha_k \cdot OPT$ where the sequence 
$\alpha_1,\ldots, \alpha_n$ is non-increasing that is, $\alpha_1\ge \alpha_1\ge \ldots \ge \alpha_n$. 
We invoke Algorithm {\sf Prophet Secretary} with these thresholds and analyze the competitive ratio of 
Algorithm {\sf Prophet Secretary} with respect to coefficients $\alpha_k$. 
Theorem~\ref{thm:n:thresholds} shows that there exists a sequence of coefficients $\alpha_k$ 
that leads to the competitive ratio of $(1-1/e)\approx 0.63$.

\vspace{0.2cm}
\noindent
\textbf{Proof of Theorem~\ref{thm:n:thresholds}.}
We prove the theorem in two steps: 
First, we find a lower bound on $\ex{z}$ in terms of $OPT$ and coefficients $\alpha_i$. 
Second, we set coefficients $\alpha_k$ so that $(1)$ $\alpha_1$ becomes the competitive ratio of Algorithm {\sf Prophet Secretary} 
and $(2)$ $\alpha_1$ converges to $1-1/e$, when $n$ goes to infinity. 

We start by proving the following auxiliary lemmas. 
In the first lemma, we find a lower bound for $\int_{\tau_k}^{\infty} \prob{\max X_i\geq x} dx$ based on 
$\OPT=\ex{\max_i X_i}$.

\begin{lemma}
\label{lem:lower:max:prob}
$ \int_{\tau_k}^{\infty} \prob{\max X_i\geq x} dx \ge (1-\alpha_k) \OPT$. 
\end{lemma}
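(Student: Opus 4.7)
The plan is to use the standard tail-integral formula for the expectation of a non-negative random variable, namely
\[
\OPT = \ex{\max_i X_i} = \int_0^{\infty} \prob{\max_i X_i \geq x}\, dx,
\]
and then split the integral at the threshold $\tau_k$. This is essentially the same argument already carried out in the warm-up for $n=2$ in Section~\ref{sec:2thresholds}, so I would simply isolate it here as a standalone lemma so it can be reused cleanly for general $n$.

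Concretely, I would write
\[
\OPT = \int_0^{\tau_k} \prob{\max_i X_i \geq x}\, dx + \int_{\tau_k}^{\infty} \prob{\max_i X_i \geq x}\, dx,
\]
and then bound the first integrand trivially by $\prob{\max_i X_i \geq x} \leq 1$, which gives
\[
\int_0^{\tau_k} \prob{\max_i X_i \geq x}\, dx \leq \tau_k = \alpha_k \OPT.
\]
Subtracting and using $\tau_k = \alpha_k \OPT$ yields
\[
\int_{\tau_k}^{\infty} \prob{\max_i X_i \geq x}\, dx \geq \OPT - \alpha_k \OPT = (1 - \alpha_k)\OPT,
\]
which is precisely the claim.

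There is no real obstacle here: the lemma is a one-line consequence of the layer-cake identity plus the trivial bound on a probability, and it does not use the random permutation, the algorithm, or the independence of the $X_i$'s beyond what is already in the definition of $\OPT$. The only care needed is to remember that $\tau_k = \alpha_k \OPT$ by definition, so that the final form comes out in the $\alpha_k$-language used in the statement of Theorem~\ref{thm:n:thresholds}.
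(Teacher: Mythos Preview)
Your proposal is correct and matches the paper's own proof essentially line for line: both use the tail-integral representation of $\OPT$, split at $\tau_k$, bound the first piece by $\tau_k=\alpha_k\OPT$ via $\prob{\max_i X_i\geq x}\le 1$, and rearrange.
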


\begin{proof}
For an arbitrary $k\in [n]$ and since $\prob{\max X_i \geq x}\leq 1$, we have
\begin{align*}
\OPT &= \ex{\max_i X_i} = \int_{0}^{\infty} \prob{\max X_i\geq x} dx 
\\
&
= \int_{0}^{\tau_k} \prob{\max X_i\geq x} dx + \int_{\tau_k}^{\infty} \prob{\max X_i\geq x} dx \\
&\leq \tau_k + \int_{\tau_k}^{\infty} \prob{\max X_i\geq x} dx \enspace .
\end{align*}

Since, by definition $\tau_k=\alpha_k \cdot\OPT$, we have 
$(1-\alpha_k) \cdot \OPT \leq \int_{\tau_k}^{\infty} \prob{\max X_i\geq x} dx$.
\end{proof}

Next, to find a lower bound for $\ex{z}$, we first split it into two terms.
Later, we find lower bounds for each one of these 
terms based on $\OPT=\ex{\max_i X_i}$. 

\begin{lemma}
\label{lem:two:parts}
Let $z=\sum_{k=1}^n z_k$ denote the value chosen by Algorithm {\sf Prophet Secretary}. 
For $z$ we have 
$$\ex{z}=\sum_{k=1}^n \int_0^{\tau_k} \prob{z_k\geq x} dx + \sum_{k=1}^n \int_{\tau_k}^{\infty} \prob{z_k\geq x} dx \enspace .$$ 
\end{lemma}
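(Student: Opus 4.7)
The claim is essentially a routine splitting of the layer-cake representation of $\ex{z}$, so the plan is short and mostly bookkeeping.

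First, I would invoke Proposition~\ref{prop:zbreak}, or more precisely the observation just before it: since Algorithm \textsf{Prophet Secretary} picks at most one value, at most one of the random variables $z_1,\ldots,z_n$ is nonzero on any realization, and they are all nonnegative. In particular $z=\sum_{k=1}^n z_k$ holds pointwise, and by linearity of expectation
\[
\ex{z} \;=\; \sum_{k=1}^n \ex{z_k}.
\]

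Next, for each $k$ the random variable $z_k$ is nonnegative (it is either $0$ or at least $\tau_k$), so the standard layer-cake identity gives
\[
\ex{z_k} \;=\; \int_0^\infty \prob{z_k\geq x}\,dx.
\]
I would then split this integral at the threshold $\tau_k$:
\[
\int_0^\infty \prob{z_k\geq x}\,dx \;=\; \int_0^{\tau_k} \prob{z_k\geq x}\,dx \;+\; \int_{\tau_k}^{\infty} \prob{z_k\geq x}\,dx.
\]
Summing these two pieces over $k\in[n]$ and combining with the first display yields the claimed equality.

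There is no real obstacle here; the only thing to double-check is that each $z_k$ is indeed a nonnegative random variable so that the layer-cake formula applies without having to handle a lower tail. This follows directly from the description of the algorithm: whenever the algorithm accepts at step $k$, it accepts $X_{\pi_k}\geq \tau_k\geq 0$, and otherwise $z_k=0$. The split point $\tau_k$ is chosen precisely because in the subsequent calculation the two pieces are bounded using different arguments: the piece $\int_0^{\tau_k}\prob{z_k\geq x}\,dx$ will be handled via the stopping probabilities $\theta(k)$ (so as to produce terms like $\tau_k(1-\theta(k))$, as in the $n=2$ warm-up of Section~\ref{sec:2thresholds}), while the piece $\int_{\tau_k}^{\infty}\prob{z_k\geq x}\,dx$ will be related to $\OPT$ via Lemma~\ref{lem:lower:max:prob}.
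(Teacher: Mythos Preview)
Your argument is correct and essentially matches the paper's: the paper applies the layer-cake formula to $z$ first and then uses Proposition~\ref{prop:zbreak} to swap $\prob{z\geq x}$ for $\sum_k \prob{z_k\geq x}$, whereas you use linearity of expectation first and then the layer-cake formula for each $z_k$, arriving at the same intermediate expression $\sum_k \int_0^\infty \prob{z_k\geq x}\,dx$ before splitting at $\tau_k$. The two routes are interchangeable and the remaining commentary you give about why the split point is $\tau_k$ is accurate.
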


\begin{proof}
By Proposition~\ref{prop:zbreak} we have
\begin{align}
\ex{z} &= \int_{0}^{\infty} \prob{z\geq x} dx = \sum_{k=1}^{n} \int_{0}^{\infty} \prob{z_k\geq x} dx \notag \\
&= \sum_{k=1}^n \int_0^{\tau_k} \prob{z_k\geq x} dx + \sum_{k=1}^n \int_{\tau_k}^{\infty} \prob{z_k\geq x} dx \notag\enspace ,
\end{align}
where we use this fact that $z=\sum_{k=1}^n z_k$ because 
we only pick one number for a fixed sequence of drawn values and a fixed permutation and therefore, 
at most one of $z_k$'s is non-zero. 
\end{proof}



\begin{lemma}
\label{lem:first:part}
$\sum_{k=1}^n \int_0^{\tau_k} \prob{z_k\geq x} dx \ge \OPT \sum_{k=1}^{n} (1-\theta(k))(\alpha_k -\alpha_{k+1})$.
\end{lemma}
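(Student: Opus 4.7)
The plan is to observe that the inequality is actually an equality, arising from two simple facts about the algorithm's structure plus a summation-by-parts maneuver, and that all the machinery (Proposition \ref{prop:thetaq}, Lemma \ref{lem:lower:max:prob}) built up so far is \emph{not} needed here — it will be used in the bound for the second sum $\sum_k \int_{\tau_k}^\infty \prob{z_k\ge x}\, dx$.

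The first step is the structural observation: by construction of Algorithm {\sf Prophet Secretary}, the value $z_k$ picked at step $k$ is either $0$ (if the algorithm did not stop at step $k$) or equals $X_{\pi_k}$ for some $X_{\pi_k}\ge \tau_k$. Hence $z_k\in\{0\}\cup[\tau_k,\infty)$, and therefore for every $x\in[0,\tau_k]$ we have the clean identity $\prob{z_k\ge x}=\prob{z_k>0}$, which is precisely the probability that the algorithm stops at step exactly $k$. By definition of $\theta(\cdot)$ (with the convention $\theta(0)=1$), this probability equals $\theta(k-1)-\theta(k)$.

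The second step is to evaluate the integral. Since the integrand is constant on $[0,\tau_k]$,
\[
\int_0^{\tau_k}\prob{z_k\ge x}\, dx \;=\; \tau_k\bigl(\theta(k-1)-\theta(k)\bigr) \;=\; \alpha_k\,\OPT\,\bigl(\theta(k-1)-\theta(k)\bigr).
\]
Summing over $k$ gives $\OPT\sum_{k=1}^n \alpha_k(\theta(k-1)-\theta(k))$, and the remaining work is purely algebraic.

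The third step is an Abel summation (summation by parts), with the boundary convention $\alpha_{n+1}=0$ that is implicit in the statement's use of $\alpha_k-\alpha_{k+1}$ at $k=n$. Rewriting the telescoping sum by shifting indices,
\[
\sum_{k=1}^n\alpha_k\bigl(\theta(k-1)-\theta(k)\bigr)
\;=\; \alpha_1 - \sum_{k=1}^n \theta(k)(\alpha_k-\alpha_{k+1})
\;=\; \sum_{k=1}^n (1-\theta(k))(\alpha_k-\alpha_{k+1}),
\]
where in the last equality I use $\sum_{k=1}^n(\alpha_k-\alpha_{k+1})=\alpha_1$. Multiplying by $\OPT$ yields the claim (in fact as an equality, so the ``$\ge$" is tight). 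There is no real obstacle here; the only point that requires a moment's care is keeping track of the boundary convention $\theta(0)=1$ and $\alpha_{n+1}=0$ so that the two telescoping identities line up correctly.
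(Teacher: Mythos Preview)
Your proof is correct and rests on the same structural observation as the paper: since $z_k\in\{0\}\cup[\tau_k,\infty)$, the integrand $\prob{z_k\ge x}$ is constant on $(0,\tau_k]$. The two write-ups differ only in how they rearrange the resulting sum. The paper works with the cumulative identity $\sum_{j\le r}\prob{z_j\ge x}=1-\theta(r)$ for $x\le\tau_r$, decomposes each interval $[0,\tau_k]$ into the pieces $[\tau_{r+1},\tau_r]$ for $r\ge k$, and then swaps the order of summation so that on each piece the cumulative identity applies directly. You instead use the per-step identity $\prob{z_k\ge x}=\theta(k-1)-\theta(k)$, integrate each term outright, and recover the target expression by an Abel summation with the boundary conventions $\theta(0)=1$, $\alpha_{n+1}=0$. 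These are dual bookkeeping choices for the same telescoping; your route is arguably a touch more direct since it avoids the double-sum interchange, while the paper's route makes the equality $\sum_{j\le r}\prob{z_j\ge x}=1-\theta(r)$ explicit, which matches how the quantity $1-\theta(k)$ enters elsewhere in the analysis. In both cases the bound is in fact an equality, and you are right that neither Proposition~\ref{prop:thetaq} nor Lemma~\ref{lem:lower:max:prob} is needed here.
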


\begin{proof}
Suppose $x\leq \tau_k$. 
Observe that the event $z_k\geq x$ occurs when Algorithm {\sf Prophet Secretary} chooses a value at step $k$. 
In fact, since the thresholds are non-increasing, whatever we pick at the first $k$ steps would be at least $x$. 
Recall that for every $k\in [n]$, $\theta(k)$ is the probability that Algorithm {\sf Prophet Secretary} 
does not choose a value from the first $k$ steps. 
Hence, for every $k\in[n]$ and $x\leq \tau_k$ we have 
\begin{align}
\sum_{j\leq k} \prob{z_j\geq x}=1-\theta(k) \enspace . \label{eq:belowT}
\end{align}

To simplify the notation, we assume that $\alpha_0=\infty$ which means $\tau_0=\infty$ and 
we let $\alpha_{n+1}=0$ which means $\tau_{n+1}=0$. Therefore we have
$$ \sum_{k=1}^n \int_0^{\tau_k} \prob{z_k\geq x} dx  = \sum_{k=1}^{n} \int_{\tau_{n+1}}^{\tau_k} \prob{z_k\geq x} dx \enspace . $$

Next, we use Equation~(\ref{eq:belowT}) to prove the lemma as follows. 
\begin{align*}
\sum_{k=1}^n \int_0^{\tau_k} \prob{z_k\geq x} dx  
&= \sum_{k=1}^{n} \int_{\tau_{n+1}}^{\tau_k} \prob{z_k\geq x} dx \\
&= \sum_{k=1}^{n} \sum_{r=k}^{n} \int_{\tau_{r+1}}^{\tau_{r}} \prob{z_k\geq x} dx \\
&=\sum_{r=1}^{n} \int_{\tau_{r+1}}^{\tau_{r}} \sum_{k=1}^{r} \prob{z_k\geq x} dx
\geq \sum_{r=1}^{n} \int_{\tau_{r+1}}^{\tau_{r}} (1-\theta(r)) dx \\
&=\sum_{r=1}^{n} (1-\theta(r))(\tau_r -\tau_{r+1})
= \OPT \cdot\sum_{k=1}^{n} (1-\theta(k))(\alpha_k -\alpha_{k+1}) \enspace .
\end{align*}
\end{proof}


\begin{lemma}
\label{lem:second:part}
$\sum_{k=1}^n \int_{\tau_k}^{\infty} \prob{z_k\geq x} dx \ge \OPT \sum_k \frac{\theta(k)}{n} (1-\alpha_k)$.
\end{lemma}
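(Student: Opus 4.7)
The plan is to bound $\prob{z_k \geq x}$ from below, for $x \geq \tau_k$, by breaking it down according to which distribution produced the arrival at step $k$, and then to integrate and apply Lemma \ref{lem:lower:max:prob}.

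First I would fix $k \in [n]$ and $x \geq \tau_k$, and observe that the event $z_k \geq x$ occurs exactly when (a) Algorithm {\sf Prophet Secretary} has not stopped before step $k$, and (b) the value $X_{\pi_k}$ arriving at step $k$ satisfies $X_{\pi_k} \geq x$ (since $x \geq \tau_k$, this automatically passes the threshold). Conditioning on the identity of the $k$-th arrival, each distribution $D_i$ is equally likely to be placed at position $k$ (with probability $1/n$), and given that $D_i$ is at position $k$, the event that the algorithm has not stopped in the first $k-1$ steps depends only on the values drawn from the other $n-1$ distributions and on their relative order, hence is independent of $X_i$. Therefore
\[
\prob{z_k \geq x} \;=\; \frac{1}{n}\sum_{i=1}^{n} q_{-i}(k-1)\,\prob{X_i \geq x}.
\]

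Next I would apply Proposition \ref{prop:thetaq}, which gives $q_{-i}(k-1) \geq \theta(k)$ for every $i$ (with the convention $q_{-i}(0) = 1 \geq \theta(1)$ when $k=1$). Combined with the trivial bound $\sum_i \prob{X_i \geq x} \geq \prob{\max_i X_i \geq x}$ (by the union bound), this yields
\[
\prob{z_k \geq x} \;\geq\; \frac{\theta(k)}{n}\sum_{i=1}^n \prob{X_i \geq x} \;\geq\; \frac{\theta(k)}{n}\,\prob{\max_i X_i \geq x}.
\]

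I would then integrate both sides over $x \in [\tau_k, \infty)$, apply Lemma \ref{lem:lower:max:prob} to obtain
\[
\int_{\tau_k}^\infty \prob{z_k \geq x}\,dx \;\geq\; \frac{\theta(k)}{n}\int_{\tau_k}^\infty \prob{\max_i X_i \geq x}\,dx \;\geq\; \frac{\theta(k)}{n}(1-\alpha_k)\,\OPT,
\]
and sum over $k\in[n]$ to conclude the lemma.

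The only delicate point is the identity for $\prob{z_k \geq x}$: one must argue cleanly that conditioning on $\pi_k = i$, the event ``algorithm did not stop in the first $k-1$ steps'' is independent of $X_i$ (because the algorithm's decisions in steps $1,\ldots,k-1$ use only the values $X_{\pi_1},\ldots,X_{\pi_{k-1}}$, none of which equals $X_i$), and that under this conditioning the relative order of the other $n-1$ distributions is uniform, so the conditional non-stopping probability is exactly $q_{-i}(k-1)$. Everything else is a direct substitution and application of the two ingredients already established.
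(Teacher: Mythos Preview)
Your proposal is correct and follows essentially the same approach as the paper: decompose $\prob{z_k\geq x}$ by conditioning on which distribution lands at step $k$, apply Proposition~\ref{prop:thetaq} to replace $q_{-i}(k-1)$ by $\theta(k)$, use the union bound to pass to $\prob{\max_i X_i\geq x}$, integrate, and invoke Lemma~\ref{lem:lower:max:prob}. You are slightly more explicit than the paper about the independence argument and the boundary case $k=1$, but the route is identical.
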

\begin{proof}
Recall that for every distribution $D_i$ we draw a number $X_i$. 
Later, we randomly permute the numbers $X_1,\cdots,X_n$. 
Let the sequence of indices after the random permutation be $\pi_1,\ldots, \pi_n$ that is, 
at step $k$,  number $X_{\pi_k}$ for $\pi_k\in [n]$ is revealed.  

Suppose $x\geq \tau_k$. 
We break the event $z_k>0$ to $n$ different scenarios depending on which index of 
the distributions $D_1,\cdots,D_n$ is mapped to  index $\pi_k$ in the random permutation. 
Let us consider the scenario in which Algorithm {\sf Prophet Secretary} chooses the value 
drawn from a distribution $i$ at step $k$. 
Such a scenario happens if (i) Algorithm {\sf Prophet Secretary} does not choose a value from 
the first $k-1$ steps which are not drawn from $i$, and (ii) $X_i\geq \tau_k$.  
Observe that the two events are independent. Therefore, we have 
$\prob{z_k\geq x} = \sum_i \prob{\pi_k=i}\cdot  \prob{X_i\geq x} \cdot q_{-i}(k-1)$.
Since $\pi_k$ is an index in the random permutation we obtain 
$$
\prob{z_k\geq x} = \sum_i \prob{\pi_k=i}  \prob{X_i\geq x} \cdot q_{-i}(k-1) 
= \frac{1}{n}  \sum_i \prob{X_i\geq x}  q_{-i}(k-1) \enspace .
$$
Using Proposition \ref{prop:thetaq} and an application of the union bound we then have 
\begin{align}
\prob{z_k\geq x} &= \sum_i \prob{\pi_k=i} \cdot \prob{X_i\geq x} \cdot q_{-i}(k-1) \notag\\
	&= \frac{1}{n} \sum_i \prob{X_i\geq x} \cdot q_{-i}(k-1)  \notag\\
	&\geq \frac{\theta(k)}{n} \cdot \sum_i \prob{X_i\geq x} \notag
	\geq \frac{\theta(k)}{n} \cdot \prob{\max_i X_i \geq x} \enspace . \notag 
\end{align}
Therefore, we obtain the following lower bound on $\sum_{k=1}^n \int_{\tau_k}^{\infty} \prob{z_k\geq x} dx$. 
\begin{align*}
\sum_{k=1}^n \int_{\tau_k}^{\infty} \prob{z_k\geq x} dx 
&\geq \sum_k \int_{\tau_k}^{\infty} \frac{\theta(k)}{n} \prob{\max X_i\geq x} dx\\
&=\sum_k \frac{\theta(k)}{n} \int_{\tau_k}^{\infty} \prob{\max X_i\geq x} dx \enspace . 
\end{align*}
Finally, we use the lower bound of Lemma \ref{lem:lower:max:prob} for $\int_{\tau_k}^{\infty} \prob{\max X_i\geq x} dx$ 
to prove the lemma. 
\begin{align*}
\sum_{k=1}^n \int_{\tau_k}^{\infty} \prob{z_k\geq x} dx 
&\ge \sum_k \frac{\theta(k)}{n} \int_{\tau_k}^{\infty} \prob{\max X_i\geq x} dx\\
&\geq \sum_k \frac{\theta(k)}{n} \cdot (1-\alpha_k) \cdot \OPT
= \OPT \cdot \sum_k \frac{\theta(k)}{n} \cdot (1-\alpha_k) \enspace .
\end{align*}

\end{proof}


Now we can plug in the lower bounds of Lemmas \ref{lem:first:part} and \ref{lem:second:part} 
into Lemma \ref{lem:two:parts} to obtain a lower bound for 
$\ex{z}$.
\begin{corollary}
\label{lem:alltogether}
Let $z=\sum_{k=1}^n z_k$ denote the value chosen by Algorithm {\sf Prophet Secretary}. 
For $z$ we have 
$$\ex{z}\geq \OPT \cdot( \alpha_1 + \sum_{k=1}^n \theta(k) (\frac{1}{n}-\frac{\alpha_k}{n}-\alpha_k+\alpha_{k+1})   ) \enspace.$$ 
\end{corollary}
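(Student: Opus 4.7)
The plan is to substitute the lower bounds from Lemmas \ref{lem:first:part} and \ref{lem:second:part} directly into the decomposition given by Lemma \ref{lem:two:parts}, and then rearrange the resulting expression so that the coefficient of each $\theta(k)$ is isolated, while the $\theta$-free part collapses via a telescoping sum to $\alpha_1$.

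More concretely, I would start by writing
\[
\ex{z} \;\ge\; \OPT \sum_{k=1}^{n} (1-\theta(k))(\alpha_k - \alpha_{k+1}) \;+\; \OPT \sum_{k=1}^{n} \frac{\theta(k)}{n}(1-\alpha_k),
\]
using Lemmas \ref{lem:two:parts}, \ref{lem:first:part}, and \ref{lem:second:part} together with the convention $\alpha_{n+1}=0$ introduced in the proof of Lemma \ref{lem:first:part}. I would then split the first sum into a $\theta$-free part $\sum_{k=1}^{n}(\alpha_k-\alpha_{k+1})$ and a $\theta$-weighted part $-\sum_{k=1}^{n}\theta(k)(\alpha_k-\alpha_{k+1})$. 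The $\theta$-free part is a telescoping sum that equals $\alpha_1 - \alpha_{n+1} = \alpha_1$, which will provide the leading $\alpha_1$ term in the claimed bound.

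The remaining step is to collect the two $\theta(k)$-weighted contributions, namely $-\theta(k)(\alpha_k-\alpha_{k+1})$ from the first sum and $\frac{\theta(k)}{n}(1-\alpha_k)$ from the second sum, into a single expression $\theta(k)\bigl(\tfrac{1}{n}-\tfrac{\alpha_k}{n}-\alpha_k+\alpha_{k+1}\bigr)$ for each $k\in [n]$. Combining this with the leading $\alpha_1$ term yields exactly
\[
\ex{z} \;\ge\; \OPT \cdot \Bigl( \alpha_1 + \sum_{k=1}^{n} \theta(k) \Bigl( \tfrac{1}{n}-\tfrac{\alpha_k}{n}-\alpha_k+\alpha_{k+1} \Bigr) \Bigr),
\]
as desired.

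There is no real obstacle here: the content of the corollary is already fully captured by the preceding three lemmas, and the only work is careful bookkeeping of the telescoping sum and the sign on $-(\alpha_k-\alpha_{k+1})$. The one subtlety to state explicitly is the use of the boundary convention $\alpha_{n+1}=0$ (equivalently $\tau_{n+1}=0$), without which the telescoping identity $\sum_{k=1}^n(\alpha_k-\alpha_{k+1})=\alpha_1$ would be off by a boundary term.
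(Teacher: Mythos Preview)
Your proposal is correct and follows exactly the approach the paper takes: the corollary is stated immediately after Lemmas~\ref{lem:two:parts}, \ref{lem:first:part}, and \ref{lem:second:part} with the remark that one simply plugs the two lower bounds into the decomposition, and your write-up carries out precisely that substitution plus the telescoping step (with the convention $\alpha_{n+1}=0$) in more explicit detail than the paper itself provides.
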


%

We finish the proof of the theorem by proving the following claim.


\begin{lemma}
\label{lem:alpha:1}
The competitive ratio of Algorithm {\sf Prophet Secretary} is at least $\alpha_1$ which 
quickly converges to $1-1/e\approx 0.63$ when $n$ goes to infinity. 
\end{lemma}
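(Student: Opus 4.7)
The plan is to choose the coefficients $\alpha_k$ precisely so that every $\theta(k)$ term in the bound of Corollary~\ref{lem:alltogether} vanishes; the bound on $\ex{z}/\OPT$ then collapses to the constant $\alpha_1$. Setting the coefficient of each $\theta(k)$ to zero gives
$$\frac{1}{n}-\frac{\alpha_k}{n}-\alpha_k+\alpha_{k+1}=0,$$
which rearranges exactly to $\alpha_k=\frac{n\alpha_{k+1}+1}{n+1}$ from the statement of Theorem~\ref{thm:n:thresholds}. Using the boundary convention $\alpha_{n+1}=0$ (introduced in the proof of Lemma~\ref{lem:first:part}), the recurrence forces $\alpha_n=\frac{1}{n+1}$, matching the theorem's boundary condition.

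Before concluding that $\ex{z}\geq \alpha_1\,\OPT$, I would verify the hypotheses under which the earlier lemmas were derived: namely that $\alpha_1\geq\alpha_2\geq\cdots\geq\alpha_n\geq 0$, so that the thresholds are non-increasing and Proposition~\ref{prop:thetaq} (and hence Lemma~\ref{lem:second:part}) applies. From the recurrence, $\alpha_k-\alpha_{k+1}=(1-\alpha_k)/n$, so monotonicity is equivalent to $\alpha_k\le 1$. A backward induction starting from $\alpha_n=\frac{1}{n+1}$ confirms $\alpha_k\in(0,1]$ for every $k$, since $\alpha_{k+1}\le 1$ implies $\alpha_k=(n\alpha_{k+1}+1)/(n+1)\le 1$.

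To evaluate $\alpha_1$ in closed form, I would substitute $\beta_k:=1-\alpha_k$, under which the recurrence becomes multiplicative: $\beta_{k+1}=\tfrac{n+1}{n}\beta_k$. With $\beta_n=1-\tfrac{1}{n+1}=\tfrac{n}{n+1}$, iterating backward yields
$$\beta_1=\left(\frac{n}{n+1}\right)^{n-1}\beta_n=\left(\frac{n}{n+1}\right)^{n},$$
so $\alpha_1=1-\left(\frac{n}{n+1}\right)^{n}$. The standard limit $(1+1/n)^n\to e$ gives $\alpha_1\to 1-1/e$ as $n\to\infty$, completing the proof.

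There is no serious obstacle here: the argument is a clean algebraic tuning of the free parameters in Corollary~\ref{lem:alltogether}. The only care needed is bookkeeping around the boundary conventions $\alpha_0=\infty$ and $\alpha_{n+1}=0$ introduced earlier, and checking that the resulting $\alpha_k$'s really do satisfy the monotonicity required by the preceding lemmas; both are handled by the induction above.
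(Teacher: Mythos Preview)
Your proposal is correct and follows the same core approach as the paper: zero out the coefficients of $\theta(k)$ in Corollary~\ref{lem:alltogether} via the recurrence $\alpha_k=(n\alpha_{k+1}+1)/(n+1)$, then evaluate $\alpha_1$ and take the limit. The main difference is cosmetic but worth noting: the paper derives the partial-sum formula $\alpha_k=\sum_{i=0}^{n-k}\frac{n^i}{(n+1)^{i+1}}$ by induction and then computes the limit by a Riemann-sum approximation, whereas your substitution $\beta_k=1-\alpha_k$ turns the recurrence into a pure geometric iteration and yields the exact closed form $\alpha_1=1-(n/(n+1))^n$ directly, making the limit immediate. Your explicit verification that the resulting $\alpha_k$'s lie in $(0,1]$ (hence are non-increasing) is a useful sanity check that the paper does not spell out.
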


\begin{proof}
Using Corollary~\ref{lem:alltogether}, for $z$ we have
$$\ex{z}\geq \OPT  \left( \alpha_1 + \sum_{k=1}^n \theta(k) \left(\frac{1}{n}-\frac{\alpha_k}{n}-\alpha_k+\alpha_{k+1}\right)   \right) \enspace ,$$
which means that the competitive ratio depends on the probabilities $\theta(k)$'s. 
However, we can easily get rid of the probabilities $\theta(k)$'s by
choosing $\alpha_k$'s such that for every $k$, $\left(\frac{1}{n}-\frac{\alpha_k}{n}-\alpha_k+\alpha_{k+1}\right)=0$.

More formally, by starting from $\alpha_{n+1}=0$ and choosing $\alpha_k=\frac{1+n\alpha_{k+1}}{1+n}$ for $k\leq n$, 
the competitive ratio of the algorithm would be $\alpha_1$. Below, we show that when $n$ goes to infinity, $\alpha_1$ 
quickly goes to $1-1/e$ which means that the competitive ratio of 
Algorithm {\sf Prophet Secretary} converges to $1-1/e\approx 0.63$. 

First, we show by the induction that $\alpha_k=\sum_{i=0}^{n-k} \frac{n^{i}}{(1+n)^{i+1}}$. 
For the base case we have 

\begin{align*}
\alpha_n=\frac{1+n\alpha_{n+1}}{1+n}=\frac{1+n\times 0}{1+n} = \frac{n^0}{(1+n)^1} \enspace .
\end{align*}

Given $\alpha_{k+1}=\sum_{i=0}^{n-(k+1)} \frac{n^{i}}{(1+n)^{i+1}}$ we show the equality for $\alpha_k$ as follows.

\begin{align*}
\alpha_k 
 =\frac{1+n\alpha_{k+1}}{1+n}
 &= \frac{1+n(\sum_{i=0}^{n-(k+1)} \frac{n^i}{(1+n)^{i+1}})}{1+n}\\
 &=\frac{n^0}{(1+n)^1} + \sum_{i=0}^{n-(k+1)} \frac{n^{i+1}}{(1+n)^{i+2}}\\
 &=\sum_{i=0}^{n-k} \frac{n^{i}}{(1+n)^{i+1}} \enspace .
\end{align*}

Now we are ready to show $\alpha_1\geq  1-1/e$ when $n$ goes to infinity.
\begin{align*}
 \lim_{n\rightarrow \infty} \alpha_1 &=  \lim_{n\rightarrow \infty} \sum_{i=0}^{n-1} \frac {n^i} {(n+1)^{i+1}}
 =  \lim_{n\rightarrow \infty} \frac 1 {n+1} \sum_{i=0}^{n-1} (1-\frac 1{n+1})^i \\
 & \approx  \lim_{n\rightarrow \infty} \frac 1{n+1} \sum_{i=0}^{n-1} e^{-i/n}
   \approx \int_0^1  e^{-x} dx= 1-1/e\enspace. 
\end{align*}
\end{proof}

\section{Lower Bounds for Prophet Secretary and Minimization Variants of Classical Stopping Theory Problems}
\label{sec:low:bounds}
In this section we give our lower bounds for the prophet secretary problem and the minimization variants of 
the prophet inequality and prophet secretary problems. 
First, in Section \ref{sec:low:prop:sec} we show that no online algorithm for the prophet secretary problem 
can achieve a competitive ratio better than $0.75$. 
Later, in Section \ref{sec:low:min} we consider the minimization variant of the prophet inequality problem. 
We show that, even for the simple case in which numbers are drawn from \emph{identical and independent distributions (i.i.d.)}, 
there is no constant competitive online algorithm for the minimization variants of the prophet inequality and prophet 
secretary problems. 

\subsection{$0.75$-Lower Bound of Prophet Secretary Problem}
\label{sec:low:prop:sec}


Now we prove Theorem~\ref{lem:no:alg:2nd} that improves the above lower bound by showing 
that there is no algorithm for the prophet secretary 
problem with competitive ratio $0.75 +\epsilon$.

\paragraph{Proof of Theorem~\ref{lem:no:alg:2nd}.}
It is known that no algorithm can guarantee a competitive ratio better than $0.5+\epsilon$, 
where $\epsilon$ is an arbitrary small positive number  less than $1$. 
A hard example that shows this upper bound is as follow. 
We have two distributions. The first distribution always gives $1$, and the second distribution gives 
$\frac 1 \epsilon$ with probability $\epsilon$ and $0$ with probability $1-\epsilon$. 
Observe that  if we either accept the first number or reject it our expected 
outcome is $1$. On the other hand, the offline optimum algorithm takes $\frac 1 \epsilon$ with probability 
$\epsilon$ and $1$ with probability $1-\epsilon$. Therefore, the expected outcome of the offline optimum algorithm is 
$\epsilon\cdot\frac 1 \epsilon+ 1 \cdot(1-\epsilon) = 2-\epsilon$, 
and the competitive ratio is at most $\frac 1 {2-\epsilon} \leq 0.5 + \epsilon$.

The above example contains exactly two distributions. 
Thus, if the drawn numbers from these distributions arrive in a random order, 
with probability $0.5$ the arrival order is the worst case order. 
This means that in the prophet secretary, with probability $0.5$ the expected 
outcome of any algorithm on this example is at most $1$, while the offline optimum algorithm is always $2-\epsilon$. 
Therefore, there is no algorithm for the prophet secretary problem with competitive ratio better than 
$\frac {0.5\times 1+0.5\times(2-\epsilon)}{2-\epsilon}\leq 0.75+\epsilon$.


\subsection{Lower Bounds for Minimization Variants of Classical Stopping Theory Problems}
\label{sec:low:min}
In this section, we consider the minimization variant of the prophet secretary problem, in which we need to select one element from the input and we aim to minimize the value of the selected element. Indeed, we provide some hardness results for very simple and restricted cases of this problem. We extend our results to the minimization variant of the classical prophet inequality or minimization variant of secretary problem.

First, we consider the simple case that numbers are drawn from identical and independent distributions. 
In particular, we prove Theorem \ref{lem:no:alg:3rd} that shows, 
even for the simple case of identical and independent distributions, 
there is no $\frac {(1.11)^n}6$ competitive online algorithm for 
the minimization variant of the prophet inequality problem. 
Since the input items come from identical distributions, independently, randomly reordering them do not change the distribution of the items in the input. Thus, this result holds for the minimization variant of the prophet secretary problem as well.

Later, we give more power to the online algorithm and let it to change its decision once; 
we call this model \emph{online algorithm with one exchange}. 
In Theorem \ref{thm:min+1} we show that, for any large number $C$  
there is no $C$ competitive online algorithm 
with one exchange for the minimization variant of the prophet inequality. In Corollary \ref{cr:MinProphSec} and Corollary \ref{cr:MinSec} we extend this hardness result to the minimization variant of prophet secretary problem and secretary problem, respectively.


\paragraph{Proof of Theorem~\ref{lem:no:alg:3rd}.}
Suppose we have $n$ identical distributions, each gives $0$ with probability $\frac 1 3$, 
$1$ with probability $\frac 1 3$ and $2^n$ with probability $\frac 1 3$. One can see that 
with probability $(\frac 1 3)^n$, all of the numbers are $2^n$ and thus the minimum number is $2^n$. 
Also, with probability $(\frac 2 3)^n-(\frac 1 3)^n$, there is no $0$ and there is at least one $1$, 
and thus, the minimum is $1$. In all the other cases, the minimum is $0$. 
Therefore, the expected outcome of the offline optimum algorithm is 
$(\frac 1 3)^n\times 2^n + (\frac 2 3)^n-(\frac 1 3)^n < \frac {2^{n+1}} {3^n}$.

For this example, without loss of generality we can assume that any online algorithm accept $0$ 
as soon as it appears, and also it does not accept $2^n$ except for the last item. 
Assume $i+1$ is the first time that the algorithm is willing to accept $1$. 
The probability that we arrive in this point is $(\frac 2 3)^i$ and the probability that 
we see $1$ at that point is $\frac 1 3$. On the other hand, the probability that such an algorithm 
does not accept anything up to the last number and sees that the last number is $2^n$ is 
at least $(\frac 2 3)^i(\frac 1 3)^{n-i}$. 
Therefore, the expected outcome of the online algorithm is at least 
$(\frac 2 3)^i \frac 1 3\times 1+ (\frac 2 3)^i(\frac 1 3)^{n-i}\times 2^n = \frac {2^i3^{n-i-1}+3^i} {3^n}$. 
Thus, the competitive ratio is at least
\begin{align*}
\frac {2^i 3^{n-i-1}+3^i} {2^{n+1}} = \frac 1 6 ( \frac {3^{n-i}}{2^{n-i}} + \frac{3^{i+1}}{2^{n}})\enspace .
\end{align*}
If $i\leq 0.73n$, the left term is at least $\frac {(1.11)^n} 6$; otherwise, the right term is at least $\frac {(1.11)^n}6$.


\begin{theorem}
\label{thm:min+1}
For any large positive number $C$, there is no $C$-competitive algorithm for minimization prophet inequality with one exchange. 
\end{theorem}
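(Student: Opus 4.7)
The plan is to construct, for any target constant $C$, an instance of the minimization prophet inequality on which no online algorithm that may exchange once achieves competitive ratio better than $C$. The construction refines the i.i.d.\ instance of Theorem~\ref{lem:no:alg:3rd} by introducing a second scale of ``huge'' values, so that the single allowed exchange is provably insufficient to hedge against both scales.

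Concretely, I would consider $n$ i.i.d.\ items, each drawn from a four-point distribution supported on $\{0,\, 1,\, M,\, M^2\}$ with constant probabilities (say $1/4$ each, or some tuned values), where $M$ is a large parameter chosen as a function of $n$ and $C$ at the end. The offline optimum takes the minimum of the $n$ items, and since each item equals $0$ with constant probability, $\ex{\OPT} = O(1)$ as $n$ grows. For the online side I would first observe (as in Theorem~\ref{lem:no:alg:3rd}) that any reasonable strategy may be assumed to accept $0$ immediately and to never commit to $M^2$ except at the final step. I would then split the analysis along the two ``decision epochs'' induced by the exchange: the steps before the exchange is used and the steps from the exchange onward. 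Within each epoch the algorithm behaves like a no-exchange algorithm on an i.i.d.\ minimization instance, so a Theorem~\ref{lem:no:alg:3rd}-style probabilistic argument applies and shows that with constant probability the algorithm must commit to an item of value $\ge M$ in at least one of the two epochs; alternatively, delaying both commitments forces the algorithm to accept an $M^2$ at the last step with non-vanishing probability. Both failure modes give $\ex{\text{Alg}} = \Omega(M)$ uniformly over the algorithm's strategy, and choosing $M$ sufficiently large then drives the competitive ratio above $C$.

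The main obstacle will be the case analysis on how and when the single exchange is used: relative to Theorem~\ref{lem:no:alg:3rd} this adds a fresh jump point to the algorithm's decision tree, and each branch (exchange occurs before or after a given step; exchange target has value $1$, $M$, or $M^2$) must be shown to fail symmetrically. The role of the new scale $M^2$ is exactly to absorb this extra flexibility: it guarantees that even a well-timed exchange cannot rescue the algorithm from an essentially no-exchange hard instance sitting ``inside'' either epoch, because the fallback remains much larger than $\ex{\OPT}$. Calibrating the probabilities on $\{0,1,M,M^2\}$ so that the Theorem~\ref{lem:no:alg:3rd}-style bound on each epoch yields a constant failure probability, and then taking $M = M(n,C)$ large enough, completes the argument.
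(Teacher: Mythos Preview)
Your construction does not work: on the i.i.d.\ four-point instance you propose, there \emph{is} an online algorithm with one exchange whose expected cost is at most $1+\ex{\OPT}$, so it is $O(1)$-competitive regardless of how large you take $M$. The algorithm is simply ``accept the first item whose value is not $M^2$; thereafter exchange to the first strictly smaller item you see.'' On the event $\min\in\{0,1\}$ this algorithm ends with a value in $\{0,1\}$ (the initial accept is $0$, $1$, or $M$, and in the $M$ case a $0$ or $1$ exists later and triggers the exchange); on the event $\min=M$ it accepts the first $M$ and never exchanges; on the event $\min=M^2$ it is forced to take $M^2$. Hence
\[
\ex{Alg}\;\le\;\Pr[\min\le 1]\cdot 1+\Pr[\min=M]\cdot M+\Pr[\min=M^2]\cdot M^2\;\le\;1+\ex{\OPT}.
\]
This directly contradicts your key claim that $\ex{Alg}=\Omega(M)$ uniformly over algorithms. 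The underlying issue is your ``two epochs'' heuristic: the exchange time is chosen adaptively by the algorithm, so you cannot treat the pre-exchange and post-exchange phases as independent no-exchange instances, and with only two large scales one well-placed exchange is enough to hedge perfectly against both.

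The paper's proof is quite different in spirit and much shorter. It uses only three \emph{non-identical} distributions in a fixed order: $X_1\equiv 1$; $X_2=\tfrac{1}{\epsilon}$ w.p.\ $\epsilon$ and $\tfrac{\epsilon}{1-\epsilon}$ otherwise; $X_3=\tfrac{1}{\epsilon}$ w.p.\ $\epsilon$ and $0$ otherwise. Here $\ex{\OPT}=2\epsilon^2$, while a short case analysis (on whether the algorithm burns its exchange on $X_1$, and on what it does at step~$2$) shows every one-exchange algorithm has expected cost at least $\epsilon$; taking $\epsilon=\tfrac{1}{2C}$ finishes. The hardness comes from the correlated structure of the last two distributions, not from an i.i.d.\ scale hierarchy.
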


\begin{proof}
Suppose we have three distributions as follows. The first distribution always gives $1$. 
The second distribution gives $\frac 1 \epsilon$ with probability $\epsilon$ and 
gives $\frac \epsilon {1-\epsilon}$ with probability $1-\epsilon$. 
The third distribution gives $\frac 1 \epsilon$ with probability $\epsilon$ and gives $0$ with probability $1-\epsilon$. We set $\epsilon$ later.

We observe that the minimum number is $0$ with probability $1-\epsilon$, is $\frac \epsilon {1-\epsilon}$ 
with probability $\epsilon(1-\epsilon)$ and is $1$ with probability $\epsilon^2$. 
Thus, the expected outcome of the optimum algorithm is 
$$(1-\epsilon)\times 0+\epsilon(1-\epsilon)\times \frac \epsilon {1-\epsilon} + \epsilon^2 \times 1 = 2\epsilon^2 \enspace .$$

Now, we show that the outcome of any online algorithm with one exchange for this input is at least $\epsilon$. 
In fact, this means that, the competitive ratio can not be less than $\frac {\epsilon}{2\epsilon^2}=\frac {1}{2\epsilon}$. 
If we set $\epsilon = \frac {1}{2C}$, this means that there is no $C$-competitive algorithm for the minimization prophet inequality 
with one exchange.

Recall that there is no uncertainty in the first number. 
If an algorithm withdraw the first number, it can take the outcome of either the second distribution or the third distribution. 
However, if we do this, with probability ${\epsilon^2}$ the outcome is $\frac 1 \epsilon$. 
Thus, the expected outcome is at least $\epsilon$ as desired.

Now, we just need to show that if an algorithm does not select the first number, 
its expected outcome is at least $\epsilon$. Observe that if this happens, 
then the algorithm has only the option of choosing either the second and the third distribution. 
We consider two cases. The first case is if the algorithm does not select the second number, 
then it must select the third number. Therefore,  with probability $\epsilon$, 
the outcome of the algorithm is $1$ and thus, the expected outcome is at least $\epsilon$. 
The second case occurs if the algorithm selects the second number and that is $\frac{\epsilon}{1-\epsilon}$. 
Therefore, with probability $1-\epsilon$, the outcome of the algorithm is $\frac{\epsilon}{1-\epsilon}$ 
and again the expected outcome is $\frac{\epsilon}{1-\epsilon}\times(1-\epsilon)= \epsilon$.

\end{proof}

Finally, we show that even for the minimization variant of the prophet secretary there is no hope 
to get a constant competitive ratio. 

\begin{corollary}
\label{cr:MinProphSec}
For any large number $C$, there is no $C$-competitive algorithm for minimization prophet secretary with one exchange. 
\end{corollary}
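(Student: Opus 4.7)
The plan is to deduce the corollary from Theorem~\ref{thm:min+1} by re-using its hard three-distribution instance and paying only a constant factor for the random arrival order. I would keep exactly the three distributions $D_1, D_2, D_3$ from the proof of Theorem~\ref{thm:min+1}. Since $\OPT = \ex{\min_i X_i}$ depends only on the multiset of drawn values and not on the order in which they arrive, the offline optimum equals $2\epsilon^2$ exactly as computed there.

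Next, let $A$ be any algorithm for the minimization prophet secretary with one exchange, and let $\sigma^{\star} = (D_1, D_2, D_3)$ denote the arrival order used in Theorem~\ref{thm:min+1}. Under a uniformly random permutation, the event $E := \{\text{order is }\sigma^{\star}\}$ has probability $1/3! = 1/6$. The key step is that, conditioned on $E$, the behaviour of $A$ on the revealed sequence is indistinguishable from some prophet-inequality strategy with one exchange on the fixed order $\sigma^{\star}$: each step presents the same label/value pair, and $A$'s internal coin flips supply the required randomisation. Consequently, the case analysis of Theorem~\ref{thm:min+1} applies verbatim and yields $\ex{\text{output of }A \mid E} \geq \epsilon$.

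Combining these two facts gives $\ex{\text{output of }A} \geq \frac{1}{6}\epsilon$, so the competitive ratio is at least $(\epsilon/6)/(2\epsilon^2) = 1/(12\epsilon)$; choosing $\epsilon \le 1/(12C)$ forces it to exceed the prescribed constant $C$. The main subtlety I anticipate is in the conditioning step: I must verify that the proof of Theorem~\ref{thm:min+1} treats the algorithm purely as an abstract decision rule over the observed sequence, so that restricting any prophet-secretary algorithm to the event $E$ is itself a legitimate prophet-inequality strategy on $\sigma^{\star}$ with one exchange. Since that proof uses only (i) determinism of the first (i.e., $D_1$) value, and (ii) the conditional distributions of the second and third values given their labels, the transfer to the conditional prophet-secretary setting is automatic, and no further technical machinery is needed.
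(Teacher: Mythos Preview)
Your proposal is correct and follows essentially the same approach as the paper: reuse the three-distribution instance from Theorem~\ref{thm:min+1}, observe that the worst-case order occurs with probability $1/3!=1/6$, conclude that the algorithm's expected cost is at least $\epsilon/6$ against $\OPT=2\epsilon^2$, and set $\epsilon=1/(12C)$. Your write-up is in fact more careful than the paper's own proof in justifying the conditioning step.
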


\begin{proof}
In Theorem ~\ref{thm:min+1}, we have three distributions. 
Thus, in the prophet secretary model the worst case order happen with probability $\frac 1 6$. 
Thus, the competitive ratio can not be more than $\frac {\frac 1 6 \cdot\epsilon}{2\epsilon^2}=\frac 1 {12 \epsilon}$. 
If we set $\epsilon = \frac {1}{12C}$, this essentially means that there is no $C$-competitive algorithm for 
the minimization prophet secretary with one exchange.
\end{proof}

\begin{corollary}
		\label{cr:MinSec}
For any large number $C$ there is no $C$-competitive algorithm for the minimization secretary problem 
with one exchange. 
\end{corollary}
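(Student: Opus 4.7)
The plan is to prove Corollary~\ref{cr:MinSec} by a black-box reduction from the minimization prophet secretary problem (with one exchange) to the minimization secretary problem (with one exchange), and then invoke the hardness established in Corollary~\ref{cr:MinProphSec}. The key observation driving the reduction is that the secretary model assumes \emph{less} information than the prophet secretary model (adversarial values versus known distributions), so any good algorithm for secretary automatically yields a good algorithm for prophet secretary.

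More concretely, I would argue by contradiction. Suppose $\textsf{Alg}$ is a $C$-competitive online algorithm with one exchange for the minimization secretary problem, meaning that for every fixed value vector $(v_1,\ldots,v_n)$ and a uniformly random arrival order $\pi$, we have $\ex{\textsf{Alg}(v_{\pi_1},\ldots,v_{\pi_n})} \leq C \cdot \min_i v_i$. Now consider the hard prophet secretary instance used in the proof of Corollary~\ref{cr:MinProphSec}, consisting of three explicit distributions $D_1,D_2,D_3$, and define a new algorithm $\textsf{Alg}'$ for this instance by simply running $\textsf{Alg}$ on the realized values $X_1,X_2,X_3$ as they arrive in random order, ignoring the distributional information.

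For any conditioning on the realization $(x_1,x_2,x_3)$ of $(X_1,X_2,X_3)$, the secretary guarantee of $\textsf{Alg}$ applied pointwise yields $\ex{\textsf{Alg}' \mid X_i = x_i} \leq C \cdot \min_i x_i$. Taking expectation over the draws from the distributions then gives $\ex{\textsf{Alg}'} \leq C \cdot \ex{\min_i X_i} = C\cdot \OPT$, so $\textsf{Alg}'$ would be $C$-competitive for the minimization prophet secretary problem with one exchange. This contradicts Corollary~\ref{cr:MinProphSec} once $C$ is chosen larger than $\tfrac{1}{12\epsilon}$ (for an appropriate $\epsilon$), completing the proof.

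The step requiring the most care is verifying that the competitive guarantee transfers through the reduction: this uses the fact that $\textsf{Alg}$'s guarantee is \emph{pointwise} in the input vector (not just on average over some distribution), which is exactly what ``$C$-competitive for secretary'' means. Since the random arrival order in prophet secretary is identical to that in secretary, no additional coupling argument is needed, and the reduction is essentially tight. I do not expect any real technical obstacle; the role of this corollary is to record that the three-distribution lower bound of Corollary~\ref{cr:MinProphSec} automatically rules out constant-competitive secretary algorithms with one exchange as well.
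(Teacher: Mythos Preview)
Your proposal is correct and follows essentially the same approach as the paper: assume a $C$-competitive secretary algorithm exists, apply its pointwise guarantee to every realization of the three-distribution instance from Theorem~\ref{thm:min+1}, average over realizations to obtain a $C$-competitive prophet secretary algorithm, and contradict Corollary~\ref{cr:MinProphSec}. The paper's write-up is terser but the reduction and the key observation (that the secretary guarantee holds per realization, hence in expectation over any distribution on inputs) are identical; your parenthetical about choosing $C>\tfrac{1}{12\epsilon}$ is unnecessary since Corollary~\ref{cr:MinProphSec} already rules out $C$-competitiveness for every $C$.
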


\begin{proof}
Suppose for the sake of contradiction that there exists an algorithm $Alg$ for the minimization secretary problem 
which is $C$-competitive. Consider all possible realizations of the example in Theorem ~\ref{thm:min+1}. 
Algorithm $Alg$ is $C$-competitive when each of the these realizations comes in a random order. 
Therefore, Algorithm $Alg$ is $C$-competitive when the input is a distribution over these realizations. 
This says that $Alg$ is a $C$-competitive algorithm for the minimization prophet secretary, 
which contradicts Corollary ~\ref{cr:MinProphSec} and completes the proof. 
\end{proof}


\noindent
\textbf{Acknowledgments.}
We would like to thank Robert Kleinberg  for fruitful discussion on early stages 
of this project.





\newcommand{\Proc}{Proceedings of the~}
\newcommand{\STOC}{Annual ACM Symposium on Theory of Computing (STOC)}
\newcommand{\FOCS}{IEEE Symposium on Foundations of Computer Science (FOCS)}
\newcommand{\SODA}{Annual ACM-SIAM Symposium on Discrete Algorithms (SODA)}
\newcommand{\SOCG}{Annual Symposium on Computational Geometry (SoCG)}
\newcommand{\ICALP}{Annual International Colloquium on Automata, Languages and Programming (ICALP)}
\newcommand{\ESA}{Annual European Symposium on Algorithms (ESA)}
\newcommand{\CCC}{Annual IEEE Conference on Computational Complexity (CCC)}
\newcommand{\RANDOM}{International Workshop on Randomization and Approximation Techniques in Computer Science (RANDOM)}
\newcommand{\PODS}{ACM SIGMOD Symposium on Principles of Database Systems (PODS)}
\newcommand{\SSDBM}{ International Conference on Scientific and Statistical Database Management (SSDBM)}
\newcommand{\ALENEX}{Workshop on Algorithm Engineering and Experiments (ALENEX)}
\newcommand{\BEATCS}{Bulletin of the European Association for Theoretical Computer Science (BEATCS)}
\newcommand{\CCCG}{Canadian Conference on Computational Geometry (CCCG)}
\newcommand{\CIAC}{Italian Conference on Algorithms and Complexity (CIAC)}
\newcommand{\COCOON}{Annual International Computing Combinatorics Conference (COCOON)}
\newcommand{\COLT}{Annual Conference on Learning Theory (COLT)}
\newcommand{\COMPGEOM}{Annual ACM Symposium on Computational Geometry}
\newcommand{\DCGEOM}{Discrete \& Computational Geometry}
\newcommand{\DISC}{International Symposium on Distributed Computing (DISC)}
\newcommand{\ECCC}{Electronic Colloquium on Computational Complexity (ECCC)}
\newcommand{\FSTTCS}{Foundations of Software Technology and Theoretical Computer Science (FSTTCS)}
\newcommand{\ICCCN}{IEEE International Conference on Computer Communications and Networks (ICCCN)}
\newcommand{\ICDCS}{International Conference on Distributed Computing Systems (ICDCS)}
\newcommand{\VLDB}{ International Conference on Very Large Data Bases (VLDB)}
\newcommand{\IJCGA}{International Journal of Computational Geometry and Applications}
\newcommand{\INFOCOM}{IEEE INFOCOM}
\newcommand{\IPCO}{International Integer Programming and Combinatorial Optimization Conference (IPCO)}
\newcommand{\ISAAC}{International Symposium on Algorithms and Computation (ISAAC)}
\newcommand{\ISTCS}{Israel Symposium on Theory of Computing and Systems (ISTCS)}
\newcommand{\JACM}{Journal of the ACM}
\newcommand{\LNCS}{Lecture Notes in Computer Science}
\newcommand{\RSA}{Random Structures and Algorithms}
\newcommand{\SPAA}{Annual ACM Symposium on Parallel Algorithms and Architectures (SPAA)}
\newcommand{\STACS}{Annual Symposium on Theoretical Aspects of Computer Science (STACS)}
\newcommand{\SWAT}{Scandinavian Workshop on Algorithm Theory (SWAT)}
\newcommand{\TALG}{ACM Transactions on Algorithms}
\newcommand{\UAI}{Conference on Uncertainty in Artificial Intelligence (UAI)}
\newcommand{\WADS}{Workshop on Algorithms and Data Structures (WADS)}
\newcommand{\SICOMP}{SIAM Journal on Computing}
\newcommand{\JCSS}{Journal of Computer and System Sciences}
\newcommand{\JASIS}{Journal of the American society for information science}
\newcommand{\PMS}{ Philosophical Magazine Series}
\newcommand{\ML}{Machine Learning}
\newcommand{\DCG}{Discrete and Computational Geometry}
\newcommand{\TODS}{ACM Transactions on Database Systems (TODS)}
\newcommand{\PHREV}{Physical Review E}
\newcommand{\NATS}{National Academy of Sciences}
\newcommand{\MPHy}{Reviews of Modern Physics}
\newcommand{\NRG}{Nature Reviews : Genetics}
\newcommand{\BullAMS}{Bulletin (New Series) of the American Mathematical Society}
\newcommand{\AMSM}{The American Mathematical Monthly}
\newcommand{\JAM}{SIAM Journal on Applied Mathematics}
\newcommand{\JDM}{SIAM Journal of  Discrete Math}
\newcommand{\JASM}{Journal of the American Statistical Association}
\newcommand{\AMS}{Annals of Mathematical Statistics}
\newcommand{\JALG}{Journal of Algorithms}
\newcommand{\TIT}{IEEE Transactions on Information Theory}
\newcommand{\CM}{Contemporary Mathematics}
\newcommand{\JC}{Journal of Complexity}
\newcommand{\TSE}{IEEE Transactions on Software Engineering}
\newcommand{\TNDE}{IEEE Transactions on Knowledge and Data Engineering}
\newcommand{\JIC}{Journal Information and Computation}
\newcommand{\ToC}{Theory of Computing}
\newcommand{\Algorithmica}{Algorithmica}
\newcommand{\MST}{Mathematical Systems Theory}
\newcommand{\Com}{Combinatorica}
\newcommand{\NC}{Neural Computation}
\newcommand{\TAP}{The Annals of Probability}

\bibliographystyle{abbrv}
\bibliography{References}


\end{document}